\renewcommand\section{%
  \@startsection{section}{1}{0pt}%
  {-\baselineskip}{.5\baselineskip}%
  {\normalfont\Large\bfseries\raggedright}}
\renewcommand\subsection{%
  \@startsection{subsection}{2}{0pt}%
  {-\baselineskip}{.1\baselineskip}%
  {\large\bfseries\raggedright}}
\renewcommand\subsubsection{%
  \@startsection{subsection}{2}{0pt}%
  {-\baselineskip}{.1\baselineskip}%
  {\normalfont\scshape\raggedright}}
\newtheorem{thm}{Theorem}
\newtheorem*{thm*}{Theorem}
\newtheorem{assum}[thm]{Assumption}
\newtheorem*{assum*}{Assumption}
\newtheorem{definition}[thm]{Definition} 
\newtheorem{prop}[thm]{Proposition}
\newtheorem*{prop*}{Proposition}
\newtheorem{lem}[thm]{Lemma}
\newtheorem*{lem*}{Lemma}
\newtheorem*{fact*}{Fact}
\newtheorem*{cor*}{Corollary}
\newtheorem{rem}[thm]{Remark}
\newtheorem*{rem*}{Remark}
\crefname{thm}{Theorem}{Theorems}
\Crefname{thm}{Theorem}{Theorems}
\crefname{assum}{Assumption}{Assumptions}
\Crefname{assum}{Assumption}{Assumptions}
\crefname{prop}{Proposition}{Propositions}
\Crefname{prop}{Proposition}{Propositions}
\crefname{lem}{Lemma}{Lemmas}
\Crefname{lem}{Lemma}{Lemmas}
\crefname{cor}{Corollary}{Corollaries}
\Crefname{cor}{Corollary}{Corollaries}
\crefname{rem}{Remark}{Remarks}
\Crefname{rem}{Remark}{Remarks}
\crefname{equation}{eq.}{eqs.}
\Crefname{equation}{eq.}{eqs.}
\def\be{\begin{equation}}
\def\ee{\end{equation}}
\def\ben{\begin{eqnarray}}
\def\een{\end{eqnarray}}
\newcommand\cP{\mathcal{P}}
\newcommand\cR{\mathcal{R}}
\newcommand\cM{\mathcal{M}}
\newcommand\cS{\mathcal{S}}
\newcommand\cC{\mathcal{C}}
\newcommand{\Vpoly}{q}
\newcommand{\Ppoly}{q_1}
\newcommand\cA{\mathcal{A}}
\newcommand\cF{\mathcal{F}}
\newcommand\NN{\mathbb{N}}
\newcommand\constantC{c}
\title{History-state Hamiltonians are critical}
\author[1,2]{Carlos E. Gonz\'alez-Guill\'en\thanks{carlos.gguillen@upm.es}}
\affil[1]{Departamento de Matem\'atica Aplicada a la Ingenier\'ia Industrial, Universidad Polit\'ecnica de Madrid, Spain}
\affil[2]{IMI, Universidad Complutense de Madrid, Spain}
\author[3]{Toby S.\ Cubitt\thanks{t.cubitt@ucl.ac.uk}}
\affil[3]{Department of Computer Science, University College London, UK}
\date{}
\begin{document}

\maketitle

\begin{abstract}
  All Hamiltonian complexity results to date have been proven by constructing a local Hamiltonian whose ground state -- or at least some low-energy state -- is a ``computational history state'', encoding a quantum computation as a superposition over the history of the computation.
  We prove that all history-state Hamiltonians must be critical.
  More precisely, for any circuit-to-Hamiltonian mapping that maps quantum circuits to local Hamiltonians with low-energy history states, there is an increasing sequence of circuits that maps to a growing sequence of Hamiltonians with spectral gap closing at least as fast as $O(1/n)$ with the number of qudits $n$ in the circuit.
  This result holds for very general notions of history state, and also extends to quasi-local Hamiltonians with exponentially-decaying interactions.

  This suggests that QMA-hardness for gapped Hamiltonians (and also BQP-completeness of adiabatic quantum computation with constant gap) either require techniques beyond history state constructions,
  or gapped Hamiltonians cannot be QMA-hard (respectively, BQP-complete).
\end{abstract}



\section{Introduction}
The field of Hamiltonian complexity has produced a multitude of results concerning the computational complexity of various questions about quantum many-body Hamiltonians.
The best-known is the problem of estimating the ground state energy of local Hamiltonians (the ``Local Hamiltonian problem'').
This problem was proven complete for the class QMA for ever-simpler families of Hamiltonians: first for 5-local Hamiltonians by Kitaev\cite{Kitaev_book}, then 3-local\cite{KR}, 2-local\cite{KKR}, 2D qubit lattices\cite{Oliveira-Terhal}, 1D spin chains\cite{AGIK}, 1D translationally-invariant,\footnote{Technically this result proves QMA\textsubscript{EXP}-completeness; this is more an artefact of the way the translationally-invariant Hamiltonian is specified than a fundamental difference in the result.}
spin chains\cite{Gottesman-Irani}, low-dimensional trans\-la\-tion\-al\-ly-invariant spin chains\cite{wheelbarrow}, restricted classes of interactions\cite{Biamonte-Love}, and others\cite{spacetime}.

What all these results have in common, is that they encode quantum computation in the ground state (or at least some low-energy state) by constructing a family of local Hamiltonians whose ground states are superpositions over the history of the computation, with a form similar to:
\vspace{-.5em}
\begin{equation}\label{eq:vanilla_history_state}
  \ket{\Psi_0} = \sum_{t=0}^T \alpha_t \ket{t}\ket{\psi_t},
\vspace{-.5em}
\end{equation}
where $\ket{t}$ indicates computation time (the ``clock register'') and $\ket{\psi_t}$ is the state of the computation at time $t$ (the ``computational register'').
Hamiltonians with ground states of this form are sometimes called ``history-state Hamiltonians''.

This idea of encoding computation in superposition originally dates back to Feynman\cite{Feynman}, later picked up and significantly developed by Kitaev\cite{Kitaev_book}.
Different history states differ substantially in the precise way computation is encoded in the computational register and, particularly, in the way time is encoded.
The local Hamiltonian constructions with these history states as ground states differ even more substantially.
But one feature common to all these various Hamiltonian constructions is that their spectral gaps close as $1/\poly(n)$, where $n$ is the number of qubits in the circuit.
In condensed-matter terminology, these Hamiltonians are ``critical'' in the thermodynamic limit, indicative of being at a phase transition.\footnote{The spectral gap necessarily vanishes at a quantum phase transition. Note, however, that the converse is not necessarily true: Hamiltonians with vanishing gap can also occur away from a phase transition.}
It is tempting to conjecture that the computational complexity occurs \emph{because} they are critical.

Here, we prove that this intuition is correct for all history-state Hamiltonians (see \cref{main}, below, for the precise technical statement):
\begin{thm}[Main result -- informal]
  Any $k$-local Hamiltonian that has a low-energy history state has a spectral gap that closes as $O(1/n)$ in the system size $n$.
\end{thm}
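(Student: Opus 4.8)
The plan is to argue by constructing an explicit low-energy excited state that is nearly orthogonal to the history state, with energy only $O(1/n)$ above it, so the gap above the (low-energy) history state cannot exceed $O(1/n)$. The starting point is the observation that a history state $\ket{\Psi_0}=\sum_{t=0}^T\alpha_t\ket{t}\ket{\psi_t}$ encodes a computation that runs for $T$ steps, and for the mapping to be a genuine circuit-to-Hamiltonian mapping applied to a growing family of circuits, $T$ must grow at least linearly in the number of qudits $n$ that the circuit acts on (one needs at least one gate touching each qudit, and typically $T=\poly(n)$). The key structural fact I would extract from ``low-energy history state'' is that the Hamiltonian $H$ must, up to an additive error that vanishes with $n$, act like a tridiagonal (hopping) operator on the clock register when restricted to the relevant computational subspace: this is what forces $\ket{\Psi_0}$ (the uniform-ish superposition over all times) to be a near-ground state. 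Concretely, the effective Hamiltonian on the span of $\{\ket{t}\ket{\psi_t}\}$ looks like a weighted path-graph Laplacian of length $T$, whose ground state is the flat vector and whose first excited state has energy $\Theta(1/T^2)$ — but more to the point, one can write down trial states supported on a sub-path or carrying one unit of ``momentum'' whose energy is $O(1/T)=O(1/n)$.

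The steps, in order, would be: (1) formalise the hypothesis — fix the circuit-to-Hamiltonian mapping, apply it to a sequence of circuits $C_n$ on $n$ qudits with history length $T_n=\Omega(n)$, obtaining Hamiltonians $H_n$ with a history state $\ket{\Psi_0^{(n)}}$ of energy $\le \lambda_0(H_n)+\delta_n$ for some small $\delta_n$; (2) show that the existence of such a low-energy history state pins down the behaviour of $H_n$ on (a subspace close to) $\mathrm{span}\{\ket{t}\ket{\psi_t}\}$ to be an approximate path-graph hopping Hamiltonian, by a standard argument that any positive-semidefinite local term penalising $\ket{\Psi_0}$ too much would push its energy above the assumed bound; (3) build an explicit trial state $\ket{\Phi}$ — e.g.\ a ``wave packet'' $\sum_t \beta_t \ket{t}\ket{\psi_t}$ with slowly varying $\beta_t$ chosen orthogonal to the $\alpha_t$ profile (a discrete sine mode, or the difference of two shifted copies of the history state) — and estimate its energy as $\bra{\Phi}H_n\ket{\Phi}\le \lambda_0(H_n)+O(1/T_n)+O(\delta_n)=\lambda_0(H_n)+O(1/n)$; (4) verify $\ket{\Phi}$ is (almost) orthogonal to $\ket{\Psi_0^{(n)}}$ and to itself normalised, so by the variational characterisation of the spectral gap (min-max), $\Delta(H_n)\le \bra{\Phi}H_n\ket{\Phi}-\lambda_0(H_n)+O(\text{overlap})=O(1/n)$; (5) extend to the quasi-local case by noting the exponentially-decaying tails contribute only an exponentially small correction to all the energy estimates above, which is absorbed into the $O(1/n)$ bound. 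I would carry out (3)–(4) with the cleanest possible trial state, most likely $\ket{\Phi}\propto \sum_t \alpha_t \cos(\pi t/T)\ket{t}\ket{\psi_t}$ or a two-history-state superposition that shifts the clock by one tick, since either makes the energy bound a one-line computation once step (2) is in hand.

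The main obstacle is step (2): turning the qualitative hypothesis ``$H_n$ has a low-energy history state'' into the quantitative statement ``$H_n$ restricted to the computational subspace is $\varepsilon$-close to a path-graph hopping Hamiltonian'', \emph{without} assuming anything about the internal structure of the construction (how the clock is encoded, whether it is unary, binary, geometrically local, etc.). The difficulty is that the clock register $\ket{t}$ is only defined abstractly, so ``$H_n$ acts tridiagonally in $t$'' has to be derived rather than assumed. I expect the right tool is a perturbation/robustness argument: decompose $H_n=\sum_i h_i$ into its local terms, use $\bra{\Psi_0}H_n\ket{\Psi_0}\le\lambda_0+\delta_n$ together with $h_i\succeq 0$ (after a harmless shift) to bound each $\bra{\Psi_0}h_i\ket{\Psi_0}$, and then show that these bounds constrain the matrix elements $\bra{t'}\bra{\psi_{t'}}H_n\ket{t}\ket{\psi_t}$ for $|t-t'|\ge 2$ to be small — essentially because a non-negligible long-range clock matrix element would, combined with locality, either raise $\ket{\Psi_0}$'s energy or contradict the normalisation of the $\alpha_t$. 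Making this robust for ``very general notions of history state'' (as the abstract promises — presumably allowing the $\ket{\psi_t}$ to be arbitrary and the $\alpha_t$ to be any reasonable profile) is where the real work lies; the variational upper bound on the gap in steps (3)–(5) is then comparatively routine.
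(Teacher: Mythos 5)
Your proposal takes a genuinely different route from the paper, and the step you yourself flag as the main obstacle --- step (2), deriving that $H$ acts like a path-graph hopping operator on the clock --- is precisely the step that does not go through, and which the paper deliberately avoids having to prove. In the paper's setting the clock labels $\ket{p}$ are only required to be orthogonal product states (Assumption~\ref{code}) and $H$ is any $k$-local Hamiltonian satisfying the mild per-qudit normalisation (Assumption~\ref{normalisation}); nothing forces $H$ restricted to $\mathrm{span}\{\ket{p}\ket{\psi_p}\}$ to be tridiagonal or even banded in the time label. The hypercube example in Section~\ref{sec:normalisation} illustrates exactly this: a history-state Hamiltonian of the form $\Delta\otimes\1$ with $\Delta$ the hypercube Laplacian has long-range hops in clock time and a constant gap. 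The normalisation assumption rules out that specific example (because the per-qudit degree diverges), but it only bounds total interaction strength touching each qudit; it does not imply locality in $t$. Your perturbation argument (``bound each $\bra{\Psi_0}h_i\ket{\Psi_0}$, then conclude the long-range clock matrix elements are small'') would need to conjure a coupling between locality in physical space and locality in clock-time, and with a general clock encoding this simply isn't there. So a wave-packet-in-$t$ trial state has no a priori energy bound.

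The paper instead uses the probabilistic method and never touches the clock structure of $H$. It specialises the circuit to a \emph{local random circuit}, and uses as trial state the (truncated and renormalised) history state of the \emph{same} random circuit run from an orthogonal initial product state, $\ket{\tilde\Phi}\propto\sum_{p\in\cR}\alpha_p\ket{p}\ket{\phi_p}$. The two workhorses are then (a) a technical lemma (\cref{lemmaconc1}) showing that once $|t_p-t_{p'}|$ is large, $\ket{\psi_p}$ is locally close to Haar-random and the cross terms $\bra{p}\bra{\psi_p}h\ket{p'}\ket{\psi_{p'}}$ are tiny with overwhelming probability over circuits, controlling the energy cost of the truncation; and (b) a second lemma (\cref{lemmaconc2}) showing that $\ket{\psi_t}$ and $\ket{\phi_t}$ become locally indistinguishable after the circuit scrambles --- so every local energy term gives nearly the same answer on $\ket{\tilde\Psi}$ and $\ket{\tilde\Phi}$. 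Because $H$ is local and only probes local marginals (that is what Assumption~\ref{normalisation} is for), the two energies agree up to $O(1/T)$. The probability bounds come from the Brandao--Harrow--Horodecki design result, Low's concentration transfer lemma, and $\varepsilon$-net counting, followed by a union bound over all admissible Hamiltonians. This is closer in spirit to your alternative suggestion of ``a two-history-state superposition that shifts the clock by one tick'' --- but crucially the shift is in the \emph{initial state}, not in the \emph{clock}, which sidesteps the need to understand $H$'s action on the clock register at all. The paper's Conclusion explicitly contrasts this probabilistic route with the combinatorial/Markov-chain route closer to what you sketch (Crosson--Bowen); the latter applies to individual history states but requires much more structure on $H$, whereas the paper's route is agnostic to the clock encoding at the price of only working for (generic) random circuits.
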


\section{Main results}
To state our results precisely, we must first define precisely what we mean by a ``history state'', and what class of Hamiltonians we are considering.
We would like these to be as general as possible, to give as strong a result as possible.

\subsection{Hamiltonian normalisation}\label{sec:normalisation}
We consider Hamiltonians $H = \sum_Z h_Z$ constructed out of local (or quasi-local) interactions $h_Z$, where $Z$ is the subset of qudits the term acts on non-trivially.
It is important that we normalise our Hamiltonians appropriately, otherwise there are simple strategies to amplify its spectral gap arbitrarily.
A trivial example is multiplying the entire Hamiltonian by a scalar that grows polynomially.
We want to choose the mildest normalisation condition that is meaningful.
In particular, we do not want to explicitly restrict the allowed types of interaction graph.

Simple counterexamples show that the normalisation common in condensed matter physics of requiring the local interactions to have constant strength, $\inftynorm{h_Z} = O(1)$, does not suffice here.
Duplicating each local term $h_Z$ in the Hamiltonian $m$ times will trivially increase the spectral gap by a factor of $m$.
Additionally requiring all local terms that act on the same set of qudits to be grouped into a single local term does not rescue things.
Increasing $k$ by 1 and adding $m$ ancillas allows us to achieve the same spectral gap amplification whilst by-passing the normalisation constraint.
Simply tensor each copy of the local term with a projector acting on a different ancilla: $H = \sum_Z\sum_{i=1}^m h_Z\ox\proj{0}_i$.
This issue does not arise in typical condensed matter systems, as they invariably involve interactions restricted to some lattice, in which case requiring constant-strength local interactions suffices.
But restricting to a lattice would exclude many known Hamiltonian complexity constructions, including Kitaev's original construction.

Another instructive family of examples are history-state Hamiltonians (unitarily equivalent to) $H = \Delta\ox\1$, where $\Delta$ is the graph Laplacian of the random walk on the $\log_2(T)$-dimensional unit hypercube.
The random walk on the unit hypercube has constant mixing time, so this Hamiltonian has constant spectral gap.
However, the $\log_2(T)$-dimensional hypercube graph has vertex degree $\log_2(T)$, implying that the number of local interactions acting on any given clock qubit diverges as $\log_2(T)$.

The appropriate normalisation for $k$-local Hamiltonians is to require the total strength of the interactions acting on any given qudit to be independent of the total number of qudits, and that is the condition under which our results apply (see \cref{sec:k-local}):
\begin{assum}\label{normalisation}
  There exists a constant $\gamma$ such that, for any qudit $q$,
  \begin{equation}
    \sum_{Z:q\in Z}\|h_Z\|_\infty\leq \gamma.
  \end{equation}
\end{assum}
This normalisation condition also has a natural generalisation to quasi-local interactions. Our results extend also to this case (see \cref{sec:quasi-local}):
\begin{assum}\label{quasi-local}
  The total interaction strength of the Hamiltonians touching any qudit $q$ has an exponential decay, that is:
  \begin{equation*}
    \sum_{Z:q\in {Z}}\|h_{Z}\|_\infty e^{k^{1+\varepsilon}} \leq \gamma,
  \end{equation*}
where, $\varepsilon>0$ and for each $Z$, $k=k(Z)$ is the number of qudits where $Z$ acts non-trivially.
\end{assum}
\noindent (This normalisation condition also appears in other results, such as generalised Lieb-Robinson bounds~\cite{Hastings_exp-decay}.)

\subsection{Generalised history states}\label{sec:gen_history_state}
Our results apply to a general notion of history state:
\begin{definition}[Generalised history state]\label{def:gen_history_state}
  Let $\HS = \bigotimes_{i=1}^N\HS_i$ with\linebreak $\HS_i = \left(\bigoplus_{x_i\in\Sigma_i}\HS_{x_i}\right)$, where each $\Sigma_i$ is a finite set of symbols.
  Denote $\vec{x} = x_1,x_2,\dots,x_N$.
  We define a \emph{generalised history state} $\ket{\Psi}\in\HS$ for a quantum computation on $n$ qudits for $T$ time steps to be a state of the form:
  \begin{equation}\label{eq:history-state}
    \ket{\Psi} = \sum_{p\in\cP}\alpha_p \ket{\psi_{\vec{x}(p)}},
  \end{equation}
  where $\vec{x}(p) \neq \vec{x}(p')$ for $p\neq p'$, and
  \begin{equation}\label{eq:computational}
    \ket{\psi_{\vec{x}(p)}} \in \bigotimes_{i=1}^n\HS_{\xi_i(p)} \text{ with }
    \bigl\{\xi_i(p)\bigr\}_{i=1}^n \subseteq \bigl\{x_i(p)\bigr\}_{i=1}^N.
  \end{equation}

  $\cP$ is a finite partially ordered set (poset) of cardinality $\abs{\cP} = \Ppoly(n)$, which should contain a totally ordered subset (chain) $\cT\subseteq\cP$ of cardinality $\abs{\cT}=T+1$ such that $\ket{\psi_{\vec{x}(t)}}$ is the state of the state of the quantum computation at time $t\in\cT\simeq\{0,\dots,T\}$.

  Moreover, let $t_p$ denote the maximum $t\in\cT$ such that $t_p\leq p$ (when this exists).
  Then for all $p\notin\cT$ such that $t_p$ exists, $\ket{\psi_{\vec{x}(p)}}$ should satisfy
  \begin{equation}\label{eq:junk_states}
    \ket{\psi_{\vec{x}(p)}} = V_p\ket{\psi_{\vec{x}(t_p)}},
  \end{equation}
  where $V_p$ is a unitary that satisfies one or both of the following conditions:
  \begin{enumerate}[label=(\roman*).,ref=(\roman*)]
  \item $V_p$ is independent of the quantum computation being encoded, or
  \item there is a polynomial $\Vpoly(n)$ such that $V_p$ is produced by a quantum circuit containing $\leq\Vpoly(n)$ gates.
  \end{enumerate}
\end{definition}

\noindent In most history state Hamiltonian constructions in the literature, there is a natural total order on the time labels, so we can take $\cP=\cT=\{0,1,\dots,T\}$, and the condition in \cref{eq:junk_states} does not apply.
But this is not always the case~\cite{spacetime}, and we do not impose this in our definition.

\begin{rem}\label{rem:gen_history_state}
  Letting $\Xi_i := \bigl\{\xi_i(t)\bigr\}_{t\in\cT}$, \cref{eq:history-state,eq:computational} of \cref{def:gen_history_state} together imply that
  \begin{equation}
    \forall t\;\forall x_i(t) \notin \Xi_i: \; \HS_{x_i(t)}=\C.
  \end{equation}
\end{rem}

To our knowledge, \cref{def:gen_history_state} encompasses all history state constructions in the literature.
The standard history state from \cref{eq:vanilla_history_state} for a quantum circuit on $n$ qudits, with a single $T$-dimensional qudit as its ``clock register'', is the special case of \cref{def:gen_history_state} with $N=n+1$, $\Sigma_{i\leq n}=\{q\}$, $\Sigma_N = \{0,\dots,T\}$, $\HS_q=\C^d$, $\HS_{0,\dots,T} = \C$.
Kitaev\cite{Kitaev_book} showed how to implement this with just qubits, using a unary clock construction, giving the special case $N=n+T$, $\Sigma_{i\leq n}=\{q\}$, $\Sigma_{i>n} = \{0,1\}$, $\HS_q=\C^2$, $\HS_{0,1} = \C$, $x_{i\leq n}(t) = \xi_i(t) = q$, $x_{n<i\leq n+t}(t) = 1$, $x_{i>n+t} = 0$.

However, \cref{def:gen_history_state} is substantially more general.
E.g.\ the complicated 1D construction of~\cite{AGIK} on 12-dimensional qudits\footnote{In fact, as noted in~\cite{Nagaj}, to prove QMA-hardness a 13th state needs to be added to fix a minor bug in the construction.}, where time is encoded not in a separate register but in the location of the ``active'' qubits on the chain, is a special case of \cref{def:gen_history_state} with $N=nT+O(1)$, $\Sigma_i = \{1,\dots,8\}$, $\HS_{i \leq 4}=\C$, $\HS_{i\geq 5}=\C^2$ and $\xi_i(t) = t+n \pm 1$.
(The $\pm 1$ depends on how far through a ``cycle'' $t$ is; see~\cite{AGIK} for the gory details.)

The ``space-time'' construction of~\cite{spacetime}, where each qubit has its own clock and time advances non-linearly by a process analogous to string diffusion on a torus, gives a special case of \cref{def:gen_history_state} with $N=2n$, $\Sigma_{2i-1} = \{q\}$, $\Sigma_{2i}=\{0,\dots,T\}$, $\HS_q=\C^2$, $\HS_{0,\dots,T}=\C$, $x_{2i-1}(t) = \xi_i(t) = q$, $x_{2i}(t) = t_i$ where $t=t_1,t_2,\dots,t_n \in \mathcal{T}$ runs over ``valid'' time-configurations (in the terminology of \cite{spacetime}).
In this case, the computational state $\ket{\psi_t}$ associated to $t$ is still a state obtained by applying a particular number of gates from a given quantum circuit.
But there is now only a partial order on the gates, and different $t$ correspond to applying the gates in all possible different orders consistent with this partial order.

In fact, \cref{def:gen_history_state} encompasses all history states that fit into the ``unitary labelled graph'' formalism introduced in \cite{wheelbarrow}, which itself encompasses all the history states in the literature.
\Cref{def:gen_history_state} also includes some states for which there is no known (and may not be any) corresponding local Hamiltonian construction (compare with \cite[Lemma~53]{wheelbarrow}).

\subsection{History-state amplitude distribution}\label{sec:amplitudes}
If the amplitudes $\alpha_t$ in \cref{def:gen_history_state} are extremely non-uniform, so that a very large fraction of the total amplitude is concentrated on only a tiny subset of times in the history state, then one can find trivial (and less trivial) examples of gapped Hamiltonians with such ground states.

For example, if $\alpha_0 = 1-\frac{1}{\poly(n)}$ so that almost all the amplitude is on the initial state, then regardless of the rest of the amplitude distribution, the Hamiltonian $H=\1-\proj{0}\ox\proj{\psi_0}$ is gapped and the energy of the corresponding history state is $\frac{1}{\poly(n)}$, close to the 0~ground state energy of $H$.
A less trivial example is provided by the standard Kitaev Hamiltonian $H_{\mathrm{Kitaev}}$~\cite{Kitaev_book}, but adding an energy \emph{bonus} to the initial state encoding the input to the circuit: $H = H_{\mathrm{Kitaev}} - \proj{0}\ox\proj{\psi_0}$.
This Hamiltonian has a constant spectral gap, and its ground state is a history state with exponentially decaying amplitudes $\alpha_t = O(e^{-t})$.
Note that in both cases, the Hamiltonian is $k$-local when the initial state $\psi_0$ is a product state, which is true for quantum computation in the circuit model.\footnote{For QMA verifier circuits, the witness part of the input is an arbitrary quantum state, which of course need not be product. But the witness part of the input is left unconstrained for local Hamiltonian QMA-hardness proofs, i.e.\ the Hamiltonian restricted to the time$=0$ subspace must act trivially on the witness part of the computational register. So any initial bonus or penalty terms in the Hamiltonian are tensored with $\1$ on the witness part, hence remain $k$-local.}

These examples are not interesting from a complexity theory perspective.
In the $\alpha_0 = 1-\frac{1}{\poly(n)}$ example, whilst the history state is a low-energy state of the gapped Hamiltonian $H=\1-\proj{0}\ox\proj{\psi_0}$, the ground state itself is trivial: it is the all-$\ket{0}$ state, independent of the quantum circuit being encoded.
In the case of the Kitaev Hamiltonian with bonus term on the input state, the amplitude on the output state is exponentially small.
The bonus term could instead be applied to the output state of the computation $\proj{T}\ox\proj{\psi_T}$, giving a gapped Hamiltonian with constant amplitude on the output and exponentially small amplitude on the input.
But the output of a quantum circuit is generally \emph{not} a product state.
In particular, it is almost maximally far from product for the random circuits used in the proofs of our results.
So the bonus term on the output cannot be $k$-local but must act on all $n$ qudits of the computational register, and the resulting Hamiltonian is not $k$-local (or even quasi-local).
On the other hand, other non-uniform history state constructions in the literature, such as that in \cite{Peres}, do have polynomially-decaying spectral gaps.
The Hamiltonian given in \cite{Peres} has a constant gap, but it includes interaction terms whose norm scales as $O(T)$.
Normalising appropriately (\cref{normalisation}), this gives a spectral gap that closes as $O(1/T)$.

Again, we want to choose the mildest condition on the amplitudes that rules out this kind of uninteresting example.
The condition under which our results apply is:
\begin{assum}\label{amplitudes}
  The history state amplitudes $\alpha_p$ in \cref{def:gen_history_state} should satisfy either of the following conditions:
  \begin{enumerate}
  \item\label{amplitudes:1}%
    For an increasing sequence $(n_i)_{i\in \NN}$, there exist polynomial functions of $n$, $T(n)\geq r(n)+r_1(n)$ and \[r(n), r_1(n)\geq 11050 n^2\log(d)\max\left\{ (4\Vpoly(n)d^4)^{11},\,n^{\constantC}\right\}\] such that \[\frac {\sum_{p \in \cR_2} \sum_{p' \in \cP \setminus \cR} |\alpha_p| |\alpha_{p'}|}{\sum_{p \in \cR}|\alpha_p|^2 }=O(1/n), \quad \sum_{p \in \cR}|\alpha_p|^2\geq 1/n^{\theta-1},\] where $ \cR_2:=\{p:r+1\leq t_p\leq r+r_1\}$ and $\cR=\left\{p\in \cP: p\geq r+1 \right\}$, $\constantC>0$ is a constant and $\theta\geq1$ is related to the energy approximation.
  \item\label{amplitudes:2}%
    For an increasing sequence $(n_i)_{i\in \NN}$, there exist polynomial functions of $n$, $T(n)\geq r(n)+r_1(n)$ and \[r(n), r_1(n)\geq 11050 n^2\log(d)\max\left\{ (4\Vpoly(n)d^4)^{11},\,n^{\constantC}\right\},\] $u(n)=\left\lfloor \frac{\log(T(n)/2)}{\log r(n)}\right\rfloor-1$ and $x_0(n)>1$ such that
    \[\sum_{p\in \cA_{x_0}\cup \cA_{x_0+1}}\hspace{-0.5cm}|\alpha_p|^2\leq r(n)^{-u(n)},\;\sum_{p\in\cP\setminus\cB_{x_0}}\hspace{-0.3cm}|\alpha_p|^2\geq r(n)^{-\kappa},\; \sum_{p\in\cB_{x_0}}\hspace{-0.2cm}|\alpha_p|^2 \geq r(n)^{-\kappa}\]
    where $\constantC>0$ is a constant, $\cA_1=\{p:\nexists t\in\cT, t\leq p \}\cup \{p:0\leq t_p\leq r(n)-1\}$, $\cA_x=\{p:(x-1)r(n)\leq t_p\leq xr(n)-1\}$ for $x=2,\dots,2r(n)^{u(n)}$, $\cB_x=\{p: t_p\geq xr(n)\}$ for $x=1,\dots,2r(n)^{u(n)}-1$, $\kappa=\min\{2u(n)-2,(\theta-1)\log(n)/\log(r(n))\}$ and $\theta\geq1$ is related to the energy approximation.
  \end{enumerate}
\end{assum}

Roughly speaking, the first condition says there should not be too much amplitude towards the beginning of the computation.
This includes the uniform amplitude case, as well as many non-uniform distributions.
The second condition says there should be some point during the computation with small amplitude.
This includes certain distributions with constant amplitude at the beginning.

\subsection{Main Theorem}
We can now state our main result precisely:
\begin{thm}\label{main}
  Consider any mapping from quantum circuits to Hamiltonians such that:
  \begin{enumerate}
  \item The history state of the circuit has energy within $O(1/n)$ of the ground state of the Hamiltonian.
  \item The Hamiltonian fulfils the normalisation condition of \cref{normalisation}.
  \item The history state amplitudes satisfy \cref{amplitudes}.
  \end{enumerate}
  Then for an increasing sequence of $(n_i)_{i\in\NN}$, there exist circuits of size $T(n)$ for which the spectral gap of the associated Hamiltonian is $O(1/n)$.
\end{thm}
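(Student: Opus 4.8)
The plan is to prove this by contradiction: suppose the Hamiltonian associated to an increasing sequence of circuits has spectral gap $\Delta = \omega(1/n)$ (i.e.\ strictly larger than any $O(1/n)$), and derive a contradiction by constructing, for a \emph{randomly chosen} circuit $C$ of size $T(n)$, a low-energy trial state orthogonal to (or nearly orthogonal to) the history state yet close to the ground state. The key tension is this: if the Hamiltonian is gapped, then any state with energy within $\Delta$ of the ground energy is close in norm to the ground space, so if we can exhibit \emph{two} nearly-orthogonal low-energy states we contradict either the gap or the fact that the ground state is (essentially) unique and equals the history state. Because a low-energy history state need not itself be the ground state, I would instead argue that the ground state must be close to the history state $\ket{\Psi(C)}$ (by the gap assumption plus hypothesis~(1)), and then show that for a random $C$ one can locally modify $\ket{\Psi(C)}$ on an intermediate time-window to get a state $\ket{\Phi}$ that is still low-energy but has small overlap with $\ket{\Psi(C)}$ — contradiction.

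The construction of $\ket{\Phi}$ is the technical heart, and this is where \cref{normalisation} (or \cref{quasi-local}) and \cref{amplitudes} enter. First, I would use the normalisation condition to show that the energy of a trial state is insensitive to modifications supported on a region that is ``causally shielded'': if I alter the computational register only at times in a window $[r+1, r+r_1]$ by inserting extra gates that implement the identity overall (or by re-routing through a different but equivalent sub-computation), the only interaction terms whose expectation can change are those touching the boundary of that window in the clock/time structure, and by \cref{normalisation} the total strength of such terms is $O(1)$ per qudit, hence the energy penalty is controlled by how much history-state amplitude sits near the window boundary. This is exactly what the ratio $\frac{\sum_{p\in\cR_2}\sum_{p'\in\cP\setminus\cR}|\alpha_p||\alpha_{p'}|}{\sum_{p\in\cR}|\alpha_p|^2}=O(1/n)$ in \cref{amplitudes}\ref{amplitudes:1} is engineered to bound — the cross terms between the modified region $\cR_2$ and its complement $\cP\setminus\cR$ — while the denominator guarantees enough weight in the modified region that the modification actually changes the state by a non-negligible amount. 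The second case of \cref{amplitudes} handles amplitude distributions concentrated early by a dyadic/recursive choice of window (the $\cA_x, \cB_x$ decomposition and the parameter $u(n)$), pigeonholing on a scale where the amplitude is provably small.

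The role of randomness: I would take $C$ to be a random quantum circuit of size $T(n)$ on $n$ qudits, deep enough (this is the source of the enormous polynomial lower bounds on $r(n), r_1(n)$, of order $n^2\log d \cdot (\Vpoly(n) d^4)^{11}$ and $n^2\log d \cdot n^{\constantC}$) that the unitary $U_{[r+1,r+r_1]}$ implementing the circuit across the window forms an approximate unitary $t$-design / scrambles thoroughly. Then, conditioned on the prefix computation up to time $r$, the two candidate ``continuations'' (the true one and the modified one) produce computational states $\ket{\psi}$ that are nearly orthogonal with high probability over the choice of $C$ — this is a standard concentration argument for Haar-random or design-random unitaries, and it is what forces $|\langle \Psi | \Phi\rangle|$ to be small. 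Condition~(i)/(ii) on the junk-state unitaries $V_p$ in \cref{def:gen_history_state} (gate-count $\leq \Vpoly(n)$) is needed precisely so that the off-chain elements $p\notin\cT$ don't reintroduce large overlap: their computational states are bounded-complexity images of on-chain states, hence also scrambled away.

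The main obstacle I expect is making the ``local modification does not change the energy much'' step fully general across \cref{def:gen_history_state} — because in that definition time need not live in a separate register, the notion of ``the boundary of the window $[r+1,r+r_1]$'' is defined only through the symbol-set decomposition $\HS_i = \bigoplus_{x_i\in\Sigma_i}\HS_{x_i}$ and the chain $\cT\subseteq\cP$, not through any geometric locality of the clock. One must argue that any $k$-local term $h_Z$, having support on $\le k$ qudits, can only ``see'' a bounded spread of time-labels, so its expectation in $\ket{\Psi}$ versus $\ket{\Phi}$ differs only via $p$'s whose $\vec{x}(p)$ restricted to $Z$ changes; then carefully bound the number and strength of such terms by \cref{normalisation}, using \cref{rem:gen_history_state} to discard the components $\HS_{x_i(t)}=\C$ that carry no data. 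Handling the quasi-local case under \cref{quasi-local} adds the requirement that the tail of long-range terms across the window boundary is summable, which the factor $e^{k^{1+\varepsilon}}$ is designed to kill — the Lieb-Robinson-type bound gives exponential suppression in the separation, so only a boundary layer of bounded effective width contributes. Once these locality-of-energy estimates are in place, the rest is the concentration argument plus the gap-vs-overlap contradiction, which are comparatively routine.
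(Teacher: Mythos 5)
Your high-level strategy --- use random circuits and scrambling to produce a low-energy state nearly orthogonal to the history state, with \cref{normalisation} controlling boundary effects and \cref{amplitudes} controlling where the amplitude sits --- is the right one and agrees with the paper's. But the concrete construction you propose for the orthogonal state does not match the paper's and has a genuine gap. The paper does not modify the history state on an intermediate time window. Instead it (a) re-initialises the \emph{same} circuit in a different product state, $\ket{0^{n-1}1}$ rather than $\ket{0^n}$, evolving under the identical gates $U_t$ and junk unitaries $V_p$, and (b) \emph{truncates} the history, discarding all $p\notin\cR$ (i.e.\ all $p$ with $t_p\leq r$) and renormalising, yielding $\ket{\tilde\Phi}=(1-\alpha)^{-1/2}\sum_{p\in\cR}\alpha_p\ket{p}\ket{\phi_p}$. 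This is \emph{exactly} orthogonal to $\ket\Psi$, because $\braket{\psi_p}{\phi_p}=0$ for every $p$ (unitaries preserve the orthogonality of the two initial states). Your window modification has no clear mechanism for near-orthogonality: if you alter only the computational register for $t\in[r+1,r+r_1]$ and leave everything else alone, the overlap with $\ket\Psi$ is at least $\sum_{t\notin[r+1,r+r_1]}|\alpha_t|^2$, and nothing in \cref{amplitudes} makes that small. ``Inserting extra identity-implementing gates'' changes the circuit, hence gives a trial state for a \emph{different} Hamiltonian, not the one whose gap you need to bound; and ``re-routing through an equivalent sub-computation'' by construction rejoins $\ket{\psi_{r+r_1}}$ at the far end of the window, which again leaves large overlap with $\ket\Psi$. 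Moreover the window construction introduces two boundaries rather than one, and the paper's proof crucially has only one.

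A secondary point is that the mechanism you invoke to bound the energy change is not the one the paper uses. There is no Lieb–Robinson or ``causal shielding'' argument: the paper directly bounds off-diagonal matrix elements $\bra{p}\bra{\psi_p}H\ket{p'}\ket{\psi_{p'}}$ using \cref{normalisation} together with the fact (from \cref{reduce_to_standard}) that $\ket p,\ket{p'}$ are orthogonal product states, so only terms $h_Z$ acting where the clock encodings differ contribute. In \cref{amplitudes}\labelcref{amplitudes:1}, the numerator $\sum_{p\in\cR_2}\sum_{p'\in\cP\setminus\cR}|\alpha_p||\alpha_{p'}|$ bounds the cross-terms between the kept history and the deleted part for $t_p$ near $r$, where scrambling cannot yet be invoked; the denominator $\sum_{p\in\cR}|\alpha_p|^2=1-\alpha$ controls the renormalisation of $\ket{\tilde\Phi}$, not ``enough weight in the modified region''. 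Finally, the contradiction framing is unnecessary: once you have two exactly orthogonal states with energy $E_0+O(1/n)$, the Courant–Fischer min-max gives $\Delta\leq O(1/n)$ directly (this is how the paper handles the case where the history state is not exactly the ground state, cf.\ the remark after \cref{main:k-local1}). You do correctly identify the role of random circuits as approximate designs, the purpose of the $\Vpoly(n)$ bound on $V_p$, and the reduction from \cref{def:gen_history_state} to standard history states.
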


For many history-state amplitude distributions, including e.g.\ uniform amplitudes $\alpha_i=1/(T+1)$, our proof implies the stronger result that the Hamiltonian must have a low-energy subspace with exponentially large dimension.
Our result also extends beyond $k$-local Hamiltonians, to quasi-local with exponentially-decaying interactions (see \cref{sec:quasi-local} for details).

\cref{main} is stated in terms of the circuit model, but by the Church-Turing thesis it immediately extends to other models of quantum computation, such as the Quantum Turing Machines of~\cite{Gottesman-Irani} or the Quantum Ring Machines of~\cite{wheelbarrow}.
Any quantum-computation-to-history-state-Hamiltonian mapping can be turned into a circuit-to-history-state-Hamiltonian mapping by first mapping the circuit to the computational model in question, which by the quantum Church-Turing thesis incurs at most $\poly$ overhead.

Many Hamiltonian constructions in the literature do not have history state ground states.
In particular, perturbation gadget constructions~\cite{KR,KKR,Oliveira-Terhal,Biamonte-Love,Schuch-Verstraete,Cubitt-Montanaro} give Hamiltonians whose ground state structure is not necessarily known.
However, the QMA-hardness proofs for all these constructions prove that the resulting Hamiltonian on $n$ qudits is $1/\poly(n)$-close in operator norm to a history state Hamiltonian.\footnote{The chain of $1/\poly$ approximations leading back to a history state Hamiltonian can stretch across multiple papers~\cite{Oliveira-Terhal,Biamonte-Love,Cubitt-Montanaro}.}
In particular, this implies that these Hamiltonians always have a history state whose energy is within $1/\poly(n)$ of the ground state energy, so satisfies the assumptions of \cref{main}.
Thus \cref{main} also covers all perturbation gadget constructions in the literature.

\section{Proofs of Main Results}\label{sec:proofs}
To prove our results, we cannot start from a Hamiltonian and analyse its spectral gap, as that could only prove the result for some specific Hamiltonian construction.
Instead, we must use the only things we know about our Hamiltonians: that they are local, and that there is a low-energy history state.

We first show in \cref{sec:general-to-standard} that it suffices to prove our results for the standard history states of \cref{eq:vanilla_history_state}.
We prove the result for $k$-local Hamiltonians in \cref{sec:k-local}, before extending it to exponentially-decaying, quasi-local interaction in \cref{sec:quasi-local}.

\subsection{From Generalised to Standard History States}
\label{sec:general-to-standard}
The following theorem shows that, for our purposes, a generalised history state from \cref{def:gen_history_state} can always be reduced to a standard history state from \cref{eq:vanilla_history_state}, with similar properties.
In particular, the spectral gap of the original generalised history state Hamiltonian is no larger than that of an equivalent standard history state Hamiltonian.

\begin{thm}\label{reduce_to_standard}
  Let $\HS = \bigotimes_{i=1}^N\HS_i$ as in \cref{def:gen_history_state} with $\abs{\Sigma_i}\leq D$, and let $\ket{\Psi} = \sum_p \alpha_p \ket{\psi_{\vec{x}(p)}} \in \HS$ be a generalised history state for an $n$-qudit quantum circuit.
  If there exists a Hamiltonian $H\in \cB(\HS)$ with ground state $\ket{\Psi}$ that is $k$-local with respect to the decomposition $\HS=\bigotimes_i\HS_i$, then there exists a $k'$-local Hamiltonian $H' \in \cB(\HS')$ with $\HS' = (\C^d)^{\ox N}\ox(\C^{D_i})^{\ox N}$ with $D_i\leq D$, such that:
  \begin{enumerate}[label=(\roman*).,ref=(\roman*)]
  \item
    $k' \leq 2k$,
  \item
    $H'$ has ground state
    \begin{equation}\label{eq:Psip}
      \ket{\Psi'} := \sum_{p\in\cP} \alpha_p \ket{p}\ket{\psi_{\vec{x}(p)}},
    \end{equation}
  \item\label{reduce_to_standard:clock}
    $\ket{p} \in (\C^{D_i})^{\ox N}$ are orthogonal product states,
  \item
    the spectral gap $\Delta(H) \leq \Delta(H')$.
  \end{enumerate}
\end{thm}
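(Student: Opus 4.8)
The plan is to realise $\HS$ inside $\HS'$ through an isometry $W = \bigotimes_{i=1}^N W_i$ built sitewise, which ``pulls out'' the direct-sum label of each $\HS_i = \bigoplus_{x_i\in\Sigma_i}\HS_{x_i}$ into an explicit $D_i$-dimensional clock register. By \cref{rem:gen_history_state}, at each site the summands $\HS_{x_i}$ are either $\C^d$ (the computational-qudit slots) or $\C$; fixing a reference vector $\ket 0\in\C^d$, let $W_i$ send $\ket v\in\HS_{x_i}\cong\C^d$ to $\ket v\ox\ket{x_i}$ and send the basis vector of each trivial summand $\HS_{x_i}=\C$ to $\ket 0\ox\ket{x_i}$. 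A dimension count gives $\dim\HS_i \le d\,\abs{\Sigma_i}$, so taking $D_i := \abs{\Sigma_i}\le D$ makes each $W_i\colon\HS_i\hookrightarrow\C^d\ox\C^{D_i}$ a well-defined isometry whose image respects the clock labels. Then $W$ carries $\ket{\psi_{\vec x(p)}}$ to $\ket p\ox\ket{\psi_{\vec x(p)}}$ (the computational part now regarded as a state of $\bigotimes_i\C^d$, with the inactive qudits in $\ket 0$) where $\ket p := \bigotimes_{i=1}^N\ket{x_i(p)}$ is a product state in $\bigotimes_i\C^{D_i}$; since $\vec x(p)\ne\vec x(p')$ for $p\ne p'$, the $\ket p$ are orthogonal, which is \cref{reduce_to_standard:clock}, and $\ket{\Psi'}=W\ket\Psi$ is exactly \cref{eq:Psip}.

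The naive candidate $WHW^\dagger$ fails to be local, since conjugating a single term $h_Z$ by the global isometry dresses it with range projectors $W_iW_i^\dagger$ on all remaining sites. Instead I would take, writing $W_Z := \bigotimes_{i\in Z}W_i$ and letting $P_i := W_iW_i^\dagger$ be the $2$-local projector onto $\mathrm{range}(W_i)\subseteq\C^d\ox\C^{D_i}$,
\[
  H' \;:=\; \sum_{Z} W_Z\,h_Z\,W_Z^\dagger \;+\; c\sum_{i=1}^{N}(\1 - P_i),
\]
with $c$ a constant fixed below. Each $W_Z h_Z W_Z^\dagger$ acts only on the at most $2k$ qudits given by the computational and clock copies of the sites in $Z$, and each $\1-P_i$ is $2$-local, so $H'$ is $k'$-local with $k'\le 2k$, giving (i). Also $\|W_Z h_Z W_Z^\dagger\|_\infty = \|h_Z\|_\infty$ because $W_Z$ is an isometry.

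It remains to locate the spectrum of $H'$. Put $P := WW^\dagger = \bigotimes_i P_i$, the projector onto $\mathrm{range}(W)$. Using $P_iW_i = W_i$ and $W_i^\dagger W_i = \1$, a short computation gives, for every $Z$,
\[
  P\,W_Z h_Z W_Z^\dagger\,P \;=\; W\,h_Z\,W^\dagger
  \qquad\text{and}\qquad
  P\,W_Z h_Z W_Z^\dagger\,(\1-P) \;=\; 0,
\]
the second identity because on the sites outside $Z$ the term acts as $\1$ while $P$ restricted there is $\bigotimes_{i\notin Z}P_i$. Summing over $Z$, the operator $A := \sum_Z W_Z h_Z W_Z^\dagger$ is block-diagonal with respect to $\HS' = \mathrm{range}(P)\oplus\mathrm{range}(P)^\perp$ and, on the first block, is unitarily equivalent to $H$ via the (bijective) isometry $W$. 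The penalty term commutes with $P$, vanishes on $\mathrm{range}(P)$, and has smallest eigenvalue $\ge 1$ on $\mathrm{range}(P)^\perp$ (the $P_i$ commute, so $\sum_i(\1-P_i)$ has integer spectrum with kernel exactly $\mathrm{range}(P)$). Hence $H'$ is block-diagonal, with $H'|_{\mathrm{range}(P)}\cong H$ and $H'|_{\mathrm{range}(P)^\perp}\succeq(c-\|A\|_\infty)\1$. Choosing $c > \|H\|_\infty + \|A\|_\infty$ — a finite constant, since $\|A\|_\infty\le\sum_Z\|h_Z\|_\infty$ — forces the second block strictly above the first excited energy of $H$, so the two lowest eigenvalues of $H'$ both come from $\mathrm{range}(P)$: the ground state of $H'$ is $W\ket\Psi = \ket{\Psi'}$, which is (ii), and $\Delta(H') = \Delta(H)$, in particular $\Delta(H)\le\Delta(H')$, which is (iv).

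The main obstacle is the locality issue just described: one cannot transport $H$ through the global isometry, and must instead check that the sitewise-dressed operator $\sum_Z W_Z h_Z W_Z^\dagger$ together with the range-enforcing penalty faithfully reproduces the low-energy physics of $H$ while remaining $2k$-local — this is exactly what the two displayed identities, and the block-diagonality they yield, accomplish. Everything else is routine bookkeeping: the dimension count making each $W_i$ a genuine isometry with $D_i\le D$, and tracking the $\xi_i(p)$ versus $x_i(p)$ labels to see that $W$ really sends $\ket{\psi_{\vec x(p)}}$ to $\ket p\ox\ket{\psi_{\vec x(p)}}$, both of which follow directly from \cref{rem:gen_history_state} and \cref{def:gen_history_state}.
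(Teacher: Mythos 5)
Your argument is correct as a proof of the four stated conclusions, but it takes a genuinely different route from the paper's, and that difference has a real cost for the theorem's downstream use.

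The paper does not embed $\HS$ into a larger space at all. It observes that $\ket\Psi$ already has support only on the subspace $\HS' := \bigotimes_i\bigl(\C^d\ox\C^{|\Xi_i|}\bigr) \subseteq \HS$ (taking $D_i=|\Xi_i|\le|\Sigma_i|\le D$, so the "clock" registers are as small as the active part of each $\Sigma_i$), and then simply defines $H' := H|_{\HS'}$, i.e.\ each $h'_Z := h_Z|_{\HS'}$. Because the ground-state energy of a Hermitian operator is non-decreasing under compression to a subspace, and $\braket{\Psi|H|\Psi}=\braket{\Psi'|H'|\Psi'}$, the compressed $\ket{\Psi'}$ is automatically the ground state of $H'$; the gap inequality $\Delta(H)\le\Delta(H')$ then follows directly from the variational principle by restricting the minimisation to $\HS'$. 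No penalty term is ever introduced. Your route instead builds a sitewise isometry $W_i\colon\HS_i\hookrightarrow\C^d\ox\C^{D_i}$ with $D_i=|\Sigma_i|$, which is strictly non-surjective (you map the one-dimensional summands to $\ket0\ox\ket{x_i}$, leaving most of $\C^d\ox\ket{x_i}$ unreached), so you are forced to fence out $\mathrm{range}(W)^\perp$ with a penalty $c\sum_i(\1-P_i)$. The block-diagonality identities you check are correct, and the spectral conclusion holds once $c>\|H\|_\infty+\|A\|_\infty$.

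The issue is that this $c$ is \emph{not} a constant, despite your phrasing: $\|H\|_\infty$ can scale as $O(m\gamma)=O(\poly(n))$ under \cref{normalisation}, and $\|A\|_\infty\le\sum_Z\|h_Z\|_\infty$ likewise. Consequently your $H'$ has, at every site, an extra term of operator norm $c=\poly(n)$, so $\sum_{Z:q\in Z}\|h'_Z\|_\infty$ is no longer $O(1)$ and $H'$ violates \cref{normalisation}. This matters: \cref{reduce_to_standard} is used in the paper precisely to reduce the generalised case to the standard one while keeping the downstream hypotheses (locality \emph{and} normalisation) intact, and the paper's restriction construction preserves normalisation for free since compression cannot increase operator norms. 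So while your argument does prove (i)--(iv) as literally stated, the resulting $H'$ cannot be fed into \cref{main:k-local1,main:k-local2}. If you want your embedding route to serve the same purpose, you should drop the penalty and instead restrict $A=\sum_Z W_Z h_Z W_Z^\dagger$ to $\mathrm{range}(P)$, which by your own block-diagonality computation is unitarily equivalent to $H$ — but then you are, in effect, reinventing the paper's restriction argument through a more elaborate detour, and at that point it is cleaner to restrict $H$ directly as the paper does.
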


\begin{proof}
  Note that, by definition, $\ket{\Psi}$ has support only on the subspace\linebreak $\bigoplus_{p\in\cP}\left(\bigotimes_{i=1}^n \HS_{\xi_i(p)} \right) \subseteq \HS$  (in the notation of \cref{def:gen_history_state}).
  Since $\ket{\Psi}$ is a history state for an $n$ qudit quantum circuit, we must have $\HS_{\xi_i(p)} = \C^d$.
  By \cref{rem:gen_history_state}, $\HS_{x_i(p)\notin\Xi_i} = \C$ where $\Xi_i := \bigl\{\xi_i(p)\bigr\}_{p\in\cP}$.
  Thus we have
  \begin{equation}
    \HS_i = \biggl(\bigoplus_{x_i\in\Xi_i}\C^d\biggr)
            \oplus \biggl(\bigoplus_{x_i\in\Sigma_i\cap\Xi_i^c}\C\biggr)
          \simeq \C^d\ox\C^{\abs{\Xi_i}} \ox \C^{\abs{\Sigma_i\cap\Xi_i^c}}
  \end{equation}
  where $\C^{\abs{\Xi_i}} = \linspan\{\ket{x_i}\}_{x_i\in\Xi_i}$, and $\ket{\Psi} = \sum_p \alpha_p \ket{\psi_{\vec{x}(p)}}$ only has support on (not necessarily all of)
  \begin{equation}\label{eq:Hp}
    \HS' := \bigotimes_{i=1}^N \left(\C^d \ox \C^{\abs{\Xi_i}}\right)
         \simeq (\C^d)^{\ox N} \ox \bigotimes_{i=1}^N\C^{\abs{\Xi_i}}.
  \end{equation}

  For each local term $h_j$ of $H$, define $h'_j = h_j|_{\HS'}$ and take $H' = \sum_j h'_j = H|_{\HS'}$.
  Each $h_j$ acts non-trivially on at most $k$ subsystems $\HS_i$ of $\HS$.
  But each $\HS_i$ contains one factor $\C^d \ox \C^{\abs{\Xi_i}}$.
  Thus each $h'_j$ acts non-trivially on at most $2k$ of the tensor factors in $\HS'$.
  Hence $H'$ is $2k$-local with respect to the tensor product decomposition in \cref{eq:Hp}, as claimed.

  By \cref{def:gen_history_state}, $\ket{\psi_{\vec{x}(p)}} \in \bigotimes_{i=1}^n\HS_{\xi_i(p)} \subseteq (\C^d)^{\ox N}$.
  Thus, considered as a state in $\HS'$, $\ket{\psi_{\vec{x}(p)}}$ has the form $\ket{\psi_{\vec{x}(p)}}\ox\bigotimes_{i=1}^N\ket{x_i(p)}$.
  Since by \cref{def:gen_history_state} $\vec{x}(p) \neq \vec{x}(p')$ for $p\neq p'$, the states $\ket{p} := \bigotimes_{i=1}^N\ket{x_i(p)}$ are orthogonal product states.

  Let $\ket{\Psi'}$ be the restriction of $\ket{\Psi}$ to $\HS'$, so that $\braket{\Psi'|H|\Psi'} = \bra{\Psi} H|_{\HS'} \ket{\Psi} = \braket{\Psi'|H'|\Psi'}$.
  The minimum eigenvalue of $H$ is non-decreasing under restriction, thus $\ket{\Psi'}$ must be the ground state of $H'$.
  We can upper bound the spectral gap of $H$ by
  \begin{equation}
    \Delta(H)
    \leq \min_{\substack{\ket{\Phi}\in\HS'\\\braket{\Phi}{\Psi}=0}} \braket{\Phi|H|\Phi} - \braket{\Psi|H|\Psi}
    = \min_{\mathclap{\substack{\ket{\Phi'}\\\braket{\Phi'}{\Psi'}=0}}}
      \braket{\Phi'|H'|\Phi'} - \braket{\Psi'|H'|\Psi'}
    = \Delta(H').
  \end{equation}
\end{proof}

Thanks to \cref{reduce_to_standard}, it suffices to prove our results for standard history states of the shape of  \cref{eq:vanilla_history_state} that satisfy the condition in~\cref{reduce_to_standard}\labelcref{reduce_to_standard:clock}.
So we restrict to this case throughout the proofs of our main technical proofs.

We encapsulate for reference the constraint on the possible time encodings implied by \cref{reduce_to_standard}\labelcref{reduce_to_standard:clock}:
\begin{assum}\label{code}
  The clock states $\ket{p}$ in the standard history state of \cref{eq:vanilla_history_state} are orthogonal product states.
\end{assum}

\subsection{$k$-local Hamiltonians}\label{sec:k-local}
We now prove \cref{main} for the case of $k$-local Hamiltonians.
The proof works by constructing a low-energy state orthogonal to the ground state.
The construction of this low-energy state differs depending on which case of \cref{amplitudes} the history state amplitude distribution satisfies; the techniques for bounding the energy of this state are similar.
We prove the first case in \cref{main:k-local1}, the second in \cref{main:k-local2}.

\begin{prop}\label{main:k-local1}
  Consider any mapping from quantum circuits to Hamiltonians such that:
  \begin{enumerate}
  \item the history state of the circuit is the ground state of the Hamiltonian;
  \item the Hamiltonian fulfils the normalisation condition of \cref{normalisation};
  \item the history state satisfies \cref{code} and \cref{amplitudes}.\labelcref{amplitudes:1}.
  \end{enumerate}
  Then, for an increasing sequence $(n_i)_{i\in \NN}$, there exist circuits of size $T(n)$ for which the spectral gap of the associated Hamiltonian is $O(1/T(n))$.
\end{prop}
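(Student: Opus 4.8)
The plan is to exhibit an explicit state $\ket{\Phi}$ orthogonal to the history state $\ket{\Psi}$ whose Rayleigh quotient $\braket{\Phi|H|\Phi}$ exceeds the ground energy by only $O(1/T(n))$; since $\ket{\Psi}$ is exactly the ground state, this upper-bounds the spectral gap. The natural candidate, in the spirit of Kitaev's geometric-lemma argument run ``in reverse'', is a \emph{re-phased} history state: roughly $\ket{\Phi} = \sum_{p\in\cP}\beta_p\,\alpha_p\ket{p}\ket{\psi_{\vec{x}(p)}}$, where the coefficients $\beta_p$ are chosen to be slowly varying along the chain $\cT$ (so the state is ``locally almost proportional'' to $\ket{\Psi}$) but vary enough over the whole history to force orthogonality. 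Concretely I would take $\beta_p$ to depend only on $t_p$ and interpolate linearly (or via a single gentle step) between two values on the block of times $\cR_2 = \{p : r+1 \le t_p \le r+r_1\}$, while being constant outside; the free overall offset is then fixed to make $\braket{\Phi}{\Psi}=0$, which is possible because $\sum_{p\in\cR}|\alpha_p|^2 \ge 1/n^{\theta-1} > 0$ gives us a nontrivial ``mass'' on $\cR$ to play against the complement. The $r,r_1 \ge 11050\,n^2\log d\,\max\{(4\Vpoly(n)d^4)^{11}, n^c\}$ lower bounds are exactly what is needed to make the per-step variation $|\beta_p - \beta_{p'}|$ small (order $1/r_1$) while keeping the blocks long enough.

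The energy estimate is where \cref{code} (orthogonal \emph{product} clock states) and \cref{normalisation} do the work. Write $H = \sum_Z h_Z$. For a term $h_Z$, split the $N$ tensor factors it touches into those acting on the clock register and those acting on the computational register. Because $\ket{\Phi}$ and $\ket{\Psi}$ differ only through the phase/weight $\beta_p$, which is a function of $t_p$ alone, the quantity $\braket{\Phi|h_Z|\Phi} - (\text{ground-energy contribution})$ can be bounded by the variation of $\beta$ across the range of times $h_Z$ can ``see''. A local term on $k$ product clock qudits can distinguish at most a bounded window of consecutive times (this is where productness is essential: a product clock state lets each local term read off only a bounded amount of timing information, so $h_Z$ connects $\ket{p}$ to $\ket{p'}$ only when $t_p, t_{p'}$ are close), so $h_Z$ contributes at most $\|h_Z\|_\infty \cdot O(1/r_1)^2 \cdot (\text{local mass})$ beyond what it contributes to $\ket{\Psi}$. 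Summing over all $Z$ and invoking \cref{normalisation} ($\sum_{Z: q\in Z}\|h_Z\|_\infty \le \gamma$ for every qudit $q$) collapses the sum to $O(\gamma\, n / r_1^2)$ times the relevant amplitude mass. Normalising by $\braket{\Phi}{\Phi}\approx \sum_{p\in\cP}|\alpha_p|^2 = 1$ and using $r_1 = \Theta(T(n))$, together with the first ratio hypothesis $\big(\sum_{p\in\cR_2}\sum_{p'\notin\cR}|\alpha_p||\alpha_{p'}|\big) / \big(\sum_{p\in\cR}|\alpha_p|^2\big) = O(1/n)$ to control the cross-terms coming from the orthogonalising offset, yields $\braket{\Phi|H|\Phi} - E_0 = O(1/T(n))$.

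Two technical points need care. First, the junk states: for $p\notin\cT$ the weight $\beta_p$ is tied to $t_p$, but the local terms acting on such $p$ can, via \cref{eq:junk_states}, ``see'' the unitary $V_p$; here conditions (i)/(ii) of \cref{def:gen_history_state} — $V_p$ either computation-independent or produced by $\le\Vpoly(n)$ gates — are what let us treat these terms uniformly, and the $(4\Vpoly(n)d^4)^{11}$ factor in the lower bound on $r,r_1$ is the slack that absorbs them. Second, the cross-terms $\braket{\Phi|h_Z|\Psi}$-type contributions arising when $h_Z$ straddles $\cR_2$ and $\cP\setminus\cR$ must be shown to be genuinely $O(1/n)$ in the normalised energy — this is precisely the numerator in hypothesis \cref{amplitudes:1}, and Cauchy–Schwarz reduces the estimate to that ratio. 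The main obstacle, and the step I expect to be most delicate, is the locality-of-time-reading argument: rigorously showing that in a \emph{general} product-clock encoding each $k$-local term only couples times within a bounded window, so that the slowly-varying $\beta$ incurs only an $O(1/r_1^2)$ penalty per term rather than $O(1)$. This requires a careful combinatorial argument about how a product state over $N$ qudits can encode a monotone time variable, using \cref{reduce_to_standard}\labelcref{reduce_to_standard:clock}, and is the real content behind the otherwise mysterious numerical constants.
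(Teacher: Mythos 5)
Your proposed argument takes a fundamentally different route from the paper, and it has a genuine gap at precisely the step you flag as ``most delicate.''

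The paper does not use a re-phased version of the same history state. Instead it employs the probabilistic method: it fixes the encoded circuit to be a \emph{local random circuit}, and constructs the competing low-energy state $\ket{\tilde\Phi}$ by running the same circuit from an orthogonal product input $\ket{0^{n-1}1}$ and truncating away the times $t\le r$. The energy comparison then rests on two facts about random circuits (both inherited from the BHH approximate-design theorem, \cref{cor:main-design}): after $r$ steps, the two evolved states $\ket{\psi_p}$ and $\ket{\phi_p}$ are \emph{locally indistinguishable} from Haar-random states (so diagonal terms match), and cross-overlaps $\bra p\bra{\psi_p}H\ket{p'}\ket{\psi_{p'}}$ between times $|t_p-t_{p'}|\ge r_1$ concentrate near zero (\cref{lemmaconc1,lemmaconc2}). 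The ``mysterious'' constants $11050$, the $11$th power, and the $4\Vpoly(n)d^4$ factor come from the design-length bound and the $\epsilon$-net over the junk unitaries $V_p$, not from any slack in a slowly-varying-phase estimate. Your reading of those constants as controlling a per-step phase gradient is off track.

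The actual gap in your approach is the claim that a $k$-local term $h_Z$ acting on orthogonal \emph{product} clock states ``connects $\ket p$ to $\ket{p'}$ only when $t_p, t_{p'}$ are close.'' This is not a consequence of \cref{code}. \Cref{code} only requires the clock states to be orthogonal product states; it places no constraint on how Hamming-close the encodings of distant times may be. For a binary counter encoding, times $0$ and $2^j$ differ on a single clock qudit and can be connected by a $1$-local term, yet are $2^j$ apart in time. For Gray-code-style encodings, a $k$-local term can likewise couple exponentially separated times. Your plan then incurs an $O(1)$ penalty, not $O(1/r_1^2)$, from such terms, and the argument collapses. To rescue it you would need to show that \emph{any} Hamiltonian satisfying the hypotheses necessarily uses a ``time-local'' clock encoding -- a substantial structural claim that is nowhere assumed, would be hard to prove in this generality, and that the paper carefully avoids needing. (There is also a smaller issue: you write $\braket{\Phi}{\Phi}\approx\sum_p|\alpha_p|^2=1$, but once $\beta_p$ takes values of both signs to enforce $\braket{\Phi}{\Psi}=0$, this normalisation is not approximately $1$ and would have to be tracked explicitly.)

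In short: your proposal is the ``slowly varying twist'' construction that works for Markov-chain-style lower bounds on the Kitaev Hamiltonian with a unary clock (as in \cite{Crosson}), but it does not survive the generality of the clock encoding allowed here. The paper's random-circuit argument is designed precisely so that no locality property of the clock is ever needed: it only needs the amplitude conditions of \cref{amplitudes} and the local indistinguishability of states produced by long random circuits.
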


\noindent\textbf{Idea of the proof:}
Consider a local random circuit of size $T$ initialised in $\ket {0^n}$, and suppose, for simplicity, that its history state is given by \cref{eq:vanilla_history_state}. Initialising the circuit in an orthogonal state, $\ket{\phi_0}=\ket {0^{n-1}1}$, will give an orthogonal history state,
\[\ket {\Phi}= \sum_{t=0}^T \alpha_t \ket {t} \ket{\phi_t},\]
where $\ket{\phi_t}=U_t \ket{\phi_{t-1}}$ for $t=1,\dots,T$. After some time $r$ the resulting circuit is `close' to be random \cite{BHH} and then the states $\ket{\psi_{r+t}}$ and $\ket{\phi_{r+t}}$ are random and locally indistinguishable for every $t>0$, so each of them will contribute with the same energy in each of the two states. Thus, if one deletes the history of the system before time $r$, then the truncated versions of the ground state $\ket {\Psi}$ and the new history state $\ket {\Phi}$ will have ``essentially'' the same energy.

By \cref{normalisation} the interaction strength of the Hamiltonian terms acting non trivially in a qudit $q$ is bounded by $\gamma$, This fact will allow to delete the history of the history state before time $r$ when $T$ is sufficiently greater than $r$, without changing the energy ``too much''. In this way, denoting $\alpha=\sum_{t=0}^r \alpha_t^2$, we get an state \[ \ket {\tilde\Phi}=\frac 1 {\sqrt{1-\alpha}} \sum_{t=r+1}^T \alpha_t \ket {t} \ket{\phi_t},\] that is orthogonal to $\ket \Psi$ and whose energy is ``close'' to the ground state energy, showing that the energy gap is ``small''.

\begin{proof}[Proof of \cref{main:k-local1}]
	Consider a local random circuit over $n$ qudits of size $T$ in the following way. In each time step $t=1,...,T$ an index $i_t$ is chosen at random from $\{1,..,n-1\}$ and a unitary $U_{t}\in \mathbb U(d^2)$ is drawn at random from the Haar measure and applied to qudits $i_t$ and $i_t+1$. Note that these circuits has already been considered in \cite{BHH,Brandao,Hayden}. The circuit is initialised in $\ket{\psi_0}=\ket {0^n}$, then thanks to \cref{reduce_to_standard} (or \cref{code}) we can assume that the history state of this circuit will be given by
	\[\ket \Psi=\sum_{p \in\cP} \alpha_p \ket p \ket{\psi_p},\]
	where $\cP$ is a finite partially ordered set (poset) of cardinality $\abs{\cP} = \Ppoly(n)$, that has a totally order subset (chain) $\cT\simeq\{0,\dots,T\}$, such that $\ket{\psi_{t}}$ is the state of the quantum computation at time $t$, that is, for $t\in\cT$
	\[\ket{\psi_t}=U_t \ket{\psi_{t-1}}.\]

	Moreover, let $t_p$ denote the maximum $t\in\cT$ such that $t_p\leq p$ (when this exists).
	Then for all $p\notin\cT$ such that $t_p$ exists, $\ket{\psi_{p}}$ should satisfy
	\begin{equation*}
	\ket{\psi_{p}} = V_p\ket{\psi_{t_p}}.
	\end{equation*}

	\noindent Consider the set
        \begin{multline*}
          \mathfrak H=\Bigl\{H=\sum_Z h_Z \;:\;
          \text{ $\ket \Psi$ is the g.s.\ of $H$},\;
          h_Z \text{ is $k$-local},\\
          \forall q\in\{1,...,m\} \sum_{Z:q\in Z}\|h_Z\|_\infty\leq \gamma \Bigr\}.
          \end{multline*}
	Note that this set depends on the history state and thus on the unitary matrices. Nevertheless, we do not make this dependency explicit in the notation for simplicity.

	In order to show the result it is enough to show that for one of these local random circuits of size $T$ and any choice of $H\in \mathfrak H$, there is a state orthogonal to $\ket \Psi$, which could be dependent of $H$, whose energy differs from the energy  of $\ket \Psi$ in $O(T^{-1})$. In fact, we are going to show a stronger result, that is, there is an state $\ket{\tilde \Phi}$ that would be close in energy to $\ket \Psi$ for any Hamiltonian in $\mathfrak H$.

	Let $r\in \cT$ be a fixed integer and let $\cR$ be the set of those indices $p$ such that $t_p\in\{r+1,...,T\}$, that is, $\cR=\left\{p\in \cP: p\geq r+1 \right\}$. Let $\alpha=\sum_{p\in\cP\setminus\cR} \alpha_p^2$. Define the state \[\ket {\tilde \Phi}=\frac 1 {\sqrt{1-\alpha}}\sum_{p\in \cR} \alpha_p \ket p \ket{\phi_p},\]
	where $\ket{\phi_{r+1}}=U_{r+1}\dots U_1\ket {0^{n-1}1}$,  for $t\in \{r+2,...,T\}$, $\ket{\phi_t}=U_t \ket{\phi_{t-1}}$, and for $p\in\cR\setminus\cT$, $\ket{\phi_{p}} = V_p\ket{\phi_{t_p}}$.
	Here, $U_t$ are the circuit unitaries and $V_p$ are the same unitaries appearing in the construction of $\ket \Psi$. This state only depends on the circuit, that is, it is the same for every $H\in \mathfrak H$.
	We will show that
	\begin{equation}\label{quantitative}
	\Pr_{l.r.c.} \left\{  \max_{H\in \mathfrak H} | \bra {\Psi} H \ket {\Psi}-\bra {\tilde \Phi} H \ket {\tilde \Phi} | \geq f(T) \right\}\leq d^{-2n},
	\end{equation}
	when $n$ and $T$ are large enough and where the asymptotic behaviour of $f(T)$ is $O(T^{-1})$. That is, the probability over the set of local random circuits that there exists a Hamiltonian with the desired properties, and such that the energy of $\ket {\tilde \Phi}$ is not within $O(T^{-1})$ of the minimal energy, is exponentially small in $n$. This immediately implies the result.

	We will divide the proof of \cref{quantitative} into two technical lemmas. Define
	\[\ket {\tilde \Psi}=\frac 1 {\sqrt{1-\alpha}}\sum_{p\in\cR} \alpha_p \ket p \ket{\psi_p}.\]
	We will first prove, in \cref{lemproof1}, that the energy of $\ket {\tilde \Psi}$ is close to the energy of $\ket {\Psi}$ for any Hamiltonian in $\mathfrak  H$ and, in \cref{lemproof2}, we will show that the energy of $\ket {\Psi}$ is close to the energy of $\ket {\tilde \Phi}$ for any Hamiltonian in $\mathfrak H$. Both results will be proven with exponentially small probability over the set of local random circuits.

	\begin{lem}\label{lemproof1}
		Let $T>r,r_1>0$, then for any $0\leq \delta \leq 1/2$,
		\begin{multline*}
                  \max_{H\in \mathfrak H}| \bra {\tilde \Psi} H \ket {\tilde \Psi}-\bra {\Psi} H \ket {\Psi} |\\
                  \leq \frac {2\gamma \sum_{p \in \cR_2} \sum_{p' \in \cP \setminus \cR}  |\alpha_p| |\alpha_{p'}|+ 2\gamma \Ppoly (n) (d^{-n/2}+\delta)}{1-\alpha}
                  \end{multline*}
		with probability greater than \[1-16 \Ppoly(n)^2 \binom{n}{2}^{2\Vpoly(n)} \left(\frac{48\Vpoly(n)}{\delta}\right)^{2\Vpoly(n)d^4} \binom m k\left(\frac{12}{\delta}\right)^{d^{2k}} \left( \frac{96s_1}{d^{n}\delta^2}  \right)^{s_1/2}\] where $s_1 = \left\lfloor \left(\frac{r_1}{11050 n^2 \log (d)}\right)^{1/11}\right\rfloor$.
	\end{lem}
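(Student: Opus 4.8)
The plan is to rewrite the energy difference as a single cross-term and then bound it using locality, the normalisation hypothesis, and a union bound over the circuit randomness. Write $\ket{A}:=\sum_{p\in\cP\setminus\cR}\alpha_p\ket{p}\ket{\psi_p}$ and $\ket{B}:=\sum_{p\in\cR}\alpha_p\ket{p}\ket{\psi_p}$, so that $\ket{\Psi}=\ket{A}+\ket{B}$, $\ket{\tilde\Psi}=\ket{B}/\sqrt{1-\alpha}$, and, by \cref{code}, $\ket{A}\perp\ket{B}$ with $\|A\|^2=\alpha$, $\|B\|^2=1-\alpha$. Since $\ket{\Psi}$ is the ground state, $H\ket{\Psi}=E_0\ket{\Psi}$ with $E_0=\bra{\Psi}H\ket{\Psi}$, and $\braket{A}{\Psi}=\alpha$; a short computation then gives $\bra{\tilde\Psi}H\ket{\tilde\Psi}-\bra{\Psi}H\ket{\Psi}=\bra{A}(H-E_0)\ket{A}/(1-\alpha)=-\bra{A}H\ket{B}/(1-\alpha)$. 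Hence it suffices to bound $|\bra{A}H\ket{B}|=\bigl|\sum_Z\sum_{p\in\cP\setminus\cR}\sum_{p'\in\cR}\alpha_p^*\alpha_{p'}\bra{p}\bra{\psi_p}h_Z\ket{p'}\ket{\psi_{p'}}\bigr|$ uniformly over $H\in\mathfrak H$.

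Two observations organise this sum. First, as the $\ket{p}$ are distinct product clock states, $\bra{p}\bra{\psi_p}h_Z\ket{p'}\ket{\psi_{p'}}$ vanishes for $p\neq p'$ unless $h_Z$ acts on every clock qudit on which $\ket{p}$ and $\ket{p'}$ differ; fixing one such qudit $q$, only the (few) terms with $q\in Z$ contribute, whose total norm is at most $\gamma$ by \cref{normalisation}. Second, compressing the clock registers converts this matrix element into $\bra{\psi_p}O^Z_{p,p'}\ket{\psi_{p'}}$ for a $k$-local operator $O^Z_{p,p'}$ on the computational register with $\|O^Z_{p,p'}\|\le\|h_Z\|_\infty$. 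Now split the sum over $p'\in\cR$ into $p'\in\cR_2$ (times $r{+}1,\dots,r{+}r_1$) and $p'\in\cR\setminus\cR_2$ (times $>r{+}r_1$). For the $\cR_2$ part, bounding each matrix element by $\|h_Z\|_\infty$ and summing out $Z$ with the first observation yields $\gamma\sum_{p'\in\cR_2}\sum_{p\in\cP\setminus\cR}|\alpha_p||\alpha_{p'}|$, which — after relabelling $p\leftrightarrow p'$ — is the first term of the stated bound.

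The deep-tail part is where the randomness enters. For $p\in\cP\setminus\cR$ and $p'\in\cR\setminus\cR_2$ we have $\ket{\psi_p}=V_p U_{t_p}\cdots U_1\ket{0^n}$ and $\ket{\psi_{p'}}=V_{p'}U_{t_{p'}}\cdots U_1\ket{0^n}$ (with $V_p=\1$ when $p\in\cT$) and $t_p\le r<r+r_1<t_{p'}$, so the gates strictly between times $t_p$ and $t_{p'}$ form a Haar-random circuit $W$ of length $\ge r_1$, independent of the prefix state $\ket{\xi}:=U_{t_p}\cdots U_1\ket{0^n}$, and $\bra{\psi_p}O^Z_{p,p'}\ket{\psi_{p'}}=\bra{\xi}\bigl(V_p^\dagger O^Z_{p,p'}V_{p'}\bigr)W\ket{\xi}$. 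For a fixed unit-norm operator $M$ and fixed $\ket{\xi}$ one has $\mathbb{E}_W\bra{\xi}MW\ket{\xi}=0$; and, by the \cite{BHH} bound, a local random circuit of length $\ge r_1$ is an approximate $s_1$-design with $s_1=\lfloor(r_1/(11050\,n^2\log d))^{1/11}\rfloor$, so the relevant moment stays within a constant of its Haar value and Markov's inequality gives $\Pr_W[|\bra{\xi}MW\ket{\xi}|\ge\delta]\le(96\,s_1/(d^n\delta^2))^{s_1/2}$. I then union-bound this failure event over the $\Ppoly(n)^2$ choices of $(p,p')$; over the $\binom{n}{2}^{2\Vpoly(n)}$ gate-position patterns of $V_p,V_{p'}$ together with a $\delta/(48\Vpoly(n))$-net over their $\Vpoly(n)$ two-qudit unitaries, of total size $(48\Vpoly(n)/\delta)^{2\Vpoly(n)d^4}$ (trivial when $p\in\cT$); and over the $\binom{m}{k}$ possible supports of $O^Z_{p,p'}$ with a $\delta$-net of the unit ball of $k$-local operators, of size $(12/\delta)^{d^{2k}}$. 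On the complementary good event, $|\bra{p}\bra{\psi_p}h_Z\ket{p'}\ket{\psi_{p'}}|\le\|h_Z\|_\infty(d^{-n/2}+\delta)$ for all deep-tail pairs and all $H\in\mathfrak H$ simultaneously (the $d^{-n/2}$ absorbing the nonzero Haar second moment), and the same locality-plus-normalisation estimate together with $\bigl(\sum_p|\alpha_p|\bigr)^2\le|\cP|=\Ppoly(n)$ bounds this contribution by $\gamma\Ppoly(n)(d^{-n/2}+\delta)$. Adding the two contributions, dividing by $1-\alpha$, and multiplying the union-bound factors gives the claimed estimate and failure probability.

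The main obstacle is precisely this deep-tail estimate: we must control $|\bra{\psi_p}O\ket{\psi_{p'}}|$ not for one local operator but uniformly over the entire continuous family $\mathfrak H$ and over all junk-state circuits $V_p$, while keeping the total failure probability below the $d^{-2n}$ demanded by \cref{quantitative}. That is what forces the two $\varepsilon$-nets, and it is the requirement that the $s_1$-design moment bound dominate a union bound of size exponential in a polynomial of $n$ that dictates the polynomial lower bounds on $r$ and $r_1$ in \cref{amplitudes}: one needs $s_1\gtrsim\Vpoly(n)d^4$, hence $r_1\gtrsim n^2\log(d)(\Vpoly(n)d^4)^{11}$. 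Getting the quantitative \cite{BHH} design depth, the Haar moment $\mathbb{E}_W|\bra{\xi}MW\ket{\xi}|^{s_1}$, and the net cardinalities to fit together with matching constants is the delicate bookkeeping. A secondary subtlety is the treatment of junk states with no associated time $t_p$, whose computational component the definition leaves unconstrained: one argues either that such a state cannot be linked to a deep-tail state by a $k$-local term, or that its computational component is a function of an initial segment of the circuit and hence independent of $W$, so that the same concentration applies.
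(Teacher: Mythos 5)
Your proposal is correct and takes essentially the paper's route: it reduces to the cross-term $\bra{A}H\ket{B}$, splits it into the boundary slice $\cR_2$ (bounded trivially by $\gamma$ via the clock-orthogonality and normalisation argument) and the deep tail with $|t_p-t_{p'}|\geq r_1$ (controlled by the approximate-design concentration that the paper encapsulates in \cref{lemmaconc1}, together with the $\epsilon$-nets of \cref{eps-net-hamil,eps-net-circuits}), and then closes with a union bound and Cauchy--Schwarz. Your opening identity $\bra{\tilde\Psi}H\ket{\tilde\Psi}-E_0=\bra{A}(H-E_0)\ket{A}/(1-\alpha)=-\bra{A}H\ket{B}/(1-\alpha)$, obtained from $H\ket{\Psi}=E_0\ket{\Psi}$ and $\ket{A}\perp\ket{B}$, is a small genuine improvement over the paper's expansion of $1/(1-\alpha)=1+\alpha/(1-\alpha)$, which only yields the weaker bound $2|\bra{A}H\ket{B}|/(1-\alpha)$ and additionally requires the explicit assumption $E_0\geq 0$.
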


	\begin{proof}
		Using that $\frac 1 {1-\alpha}=1 + \frac{\alpha}{1-\alpha}$ we get
		\begin{align}
		\nonumber \mspace{20mu}&\mspace{-20mu}
                \left|\bra {\tilde \Psi} H \ket {\tilde \Psi}-\bra {\Psi} H \ket {\Psi}\right|\\
                &=\frac \alpha {1-\alpha} E_0
                  + \frac 1 {1-\alpha}  \Biggl(\sum_{p,p'\in \cR} \alpha_p^* \alpha_{p'} \bra p \bra{\psi_p} H \ket {p'} \ket{\psi_{p'}}\\
                  &\nonumber\mspace{300mu}
                    - \sum_{p,p'\in \cP} \alpha_p^*
		\alpha_{p'} \bra p \bra{\psi_p} H \ket {p'} \ket{\psi_{p'}}\Biggr)\\
		\nonumber &= \frac \alpha {1-\alpha} E_0-\frac 1 {1-\alpha}\Biggl( \sum_{p,p'\in \cP\setminus \cR} \alpha_p^* \alpha_{p'} \bra p \bra{\psi_p} H \ket {p'} \ket{\psi_{p'}}\\
                  &\nonumber\mspace{205mu}
                    + 2\mathrm{Re}{\Biggl(\sum_{p \in \cR}\sum_{ p' \in \cP \setminus \cR}\hspace{-0.1cm}\alpha_p^*\alpha_{p'}\bra p \bra{\psi_p} H  \ket {p'} \ket{\psi_{p'}}\Biggr)}\Biggr)\\
		\nonumber &\leq \frac \alpha {1-\alpha} E_0-\frac 1 {1-\alpha}\Biggl(\alpha E_0 +2\mathrm{Re}{\Biggl(\sum_{p \in \cR} \sum_{ p' \in \cP \setminus \cR}\hspace{-0.15cm} \alpha_p^* \alpha_{p'}\bra p \bra{\psi_p} H  \ket {p'} \ket{\psi_{p'}}\Biggr)} \Biggr)\\
		\nonumber &= -\frac 2 {1-\alpha} \mathrm{Re}\left( \sum_{p \in \cR} \sum_{ p' \in \cP \setminus \cR} {\alpha_p^*\alpha_{p'}\left(\bra p \bra{\psi_p} H  \ket {p'} \ket{\psi_{p'}}\right)} \right) \\
		\label{equation1}&\leq \frac 2 {1-\alpha}  {\sum_{p \in \cR} \sum_{p' \in \cP \setminus \cR} |\alpha_p|\,|\alpha_{p'}|\,\left|\bra p \bra{\psi_p} H  \ket {p'} \ket{\psi_{p'}}\right|},
		\end{align}
		where we are using that $\ket \Psi$ is the eigenvector associated to the eigenvalue of minimum energy $E_0\geq 0$.

		Now, by \cref{code}, for any $p\neq p'$,
                \begin{equation}\label{equation2}
		\left|\bra p \bra{\psi_p} H \ket {p'} \ket{\psi_{p'}}\right|\leq \gamma.
		\end{equation}
		Moreover, when $t_p\gg t_{p'}$, intuitively, we have that $\ket {\psi_{t_p}}$ is close to be Haar random and independent of $\ket {\psi_{t_{p'}}}$ and their overlap is very small even if we apply circuits not to big (of size $\Vpoly (n)$) to any of them and a $k$-local operator to one of them. This is the statement of the following Lemma whose proof will be given in the Appendix.

		\begin{lem}\label{lemmaconc1}
			Let $\mathfrak h_k=\left\{h\in B(\mathbb C^{d^m}):h=\id_{m-k} \otimes \tilde h_k,\|h\|_\infty \leq 1, h \text { is s.a.} \right\}$ be the set of $k$-local Hermitian operators on $m$ qudits with operator norm $\leq1$. For any $h\in\mathfrak h_k$, and any $p$, $p'$ such that $|t_p-t_{p'}|\geq r_1$, let  $f_h := \bra p \bra {\psi_p} h \ket {p'} \ket {\psi_{p'}}$. Then, for any $0\leq \delta \leq 1/2$,
			\begin{multline*}
                          \Pr_{l.r.c.} \left(  \max_{h\in \mathfrak h_k} |f_{h} | \geq \frac{1}{d^{n/2}}+\delta \right)\\
                          \leq 4 \binom{n}{2}^{2\Vpoly(n)} \left(\frac{48\Vpoly(n)}{\delta}\right)^{2\Vpoly(n)d^4} \binom m k\left(\frac{12}{\delta}\right)^{d^{2k}} \left( \frac{96s_1}{d^{n}\delta^2}  \right)^{s_1/2},
                          \end{multline*}
			where \[s_1 = \left\lfloor \left(\frac{r_1}{11050 n^2\log(d)}\right)^{1/11}\right\rfloor.\]
		\end{lem}
		Let $r_1$ a fixed number, take $p$, $p'$ such that $t_p-t_{p'}>r_1$ and let $q$ be one of the qudits where the encodings of $p$ and $p'$ differ. Using the fact that $\left|\bra p \bra{\psi_p} H  \ket {p'} \ket{\psi_{p'}}\right|\leq\sum_{Z:q\in Z} \left|\bra p \bra{\psi_p} h_Z  \ket {p'} \ket{\psi_{p'}}\right|$ and \cref{lemmaconc1} we have
		\begin{align}
		\nonumber&\Pr_{l.r.c.} \left\{\max_{H\in \mathfrak H}\left|\bra p \bra{\psi_p} H \ket {p'} \ket{\psi_{p'}}\right| \geq \gamma \left(\frac {1}{d^{n/2}}+\delta\right)\right\}\\
		\nonumber&\leq \Pr_{l.r.c.} \left\{\exists \{h_Z\}_Z, \sum_Z {h_Z}\in \mathfrak H : \sum_{Z:q\in Z} \left|\bra p \bra{\psi_p} h_Z  \ket {p'} \ket{\psi_{p'}}\right|
                  \geq \gamma \left(\frac {1}{d^{n/2}}+\delta\right)\right\}\\
		\nonumber&\leq \Pr_{l.r.c.} \left\{\exists \{h_Z\}_Z, \sum_Z {h_Z}\in \mathfrak H : \sum_{Z:q\in Z} \left|\bra p \bra{\psi_p} h_Z  \ket {p'} \ket{\psi_{p'}}\right|\right. \\
                  \Biggl.&\nonumber\mspace{355mu}
                    \geq \sum_{Z:q\in Z} \|h_Z\|_\infty \left(\frac {1}{d^{n/2}}+\delta\right)\Biggr\}\\
		\nonumber&\leq \Pr_{l.r.c.} \left\{\exists \{h_Z\}_Z, \sum_Z {h_Z}\in \mathfrak H : \max_{Z:q\in Z} \frac{|\bra p \bra {\psi_p} h_Z \ket {p'} \ket {\psi_{p'}}|}{\|h_Z\|_\infty}  \geq \frac {1}{d^{n/2}}+\delta \right\}\\
		\nonumber&\leq \Pr_{l.r.c.} \left\{  \max_{h\in \mathfrak h_k} |\bra p \bra {\psi_p} h \ket {p'} \ket {\psi_{p'}}|  \geq \frac {1}{d^{n/2}}+\delta \right\}\\
		&\leq 4 \binom{n}{2}^{2\Vpoly(n)} \left(\frac{48\Vpoly(n)}{\delta}\right)^{2\Vpoly(n)d^4} \binom m k\left(\frac{12}{\delta}\right)^{d^{2k}} \left( \frac{96s_1}{d^{n}\delta^2}  \right)^{s_1/2}. \label{equation3}
		\end{align}

		In order to bound, the terms of the form $|\alpha_p|\,|\alpha_{p'}|\,\left|\bra p \bra{\psi_p} H  \ket {p'} \ket{\psi_{p'}}\right|$ in \cref{equation1} we distinguish two cases, if $p\in \cR_1:=\{p:t_p\geq r+r_1\}$ then $t_p -t_{p'}\geq r_1$ and we use \cref{equation3}, and if $p\in \cR_2:=\{p:r+1\leq t_p\leq r+r_1\}$ we use \cref{equation2}. Counting the number of each of these cases and applying a union bound argument we obtain
		\begin{multline*}
		\Pr_{l.r.c.} \Biggl\{\max_{H\in \mathfrak H} \sum_{p \in \cR} \sum_{p' \in \cP \setminus \cR}  |\alpha_p|\,|\alpha_{p'}|\,\left|\bra p \bra{\psi_p} H  \ket {p'} \ket{\psi_{p'}}\right|\\
                \geq  \gamma \sum_{p \in \cR_2} \sum_{p' \in \cP \setminus \cR}  |\alpha_p| |\alpha_{p'}|+ \gamma \sum_{p\in \cR_1} \sum_{p' \in \cP \setminus \cR} |\alpha_p| |\alpha_{p'}| \left(\frac {1}{d^{n/2}}+\delta\right)\Biggr\} \\
		\leq 4 \Ppoly(n)^2 \binom{n}{2}^{2\Vpoly(n)} \left(\frac{48\Vpoly(n)}{\delta}\right)^{2\Vpoly(n)d^4} \binom m k\left(\frac{12}{\delta}\right)^{d^{2k}} \left( \frac{96s_1}{d^{n}\delta^2}  \right)^{s_1/2},
		\end{multline*}
		where $s_1 = \left\lfloor \left(\frac{r_1}{11050 n^2\log(d)}\right)^{1/11}\right\rfloor$. Relating the $l_1$ norm of the vector $\alpha=(\alpha_p)_{p\in\cP}$ with its $l_2$ norm and putting this together with \cref{equation1} we obtain the result.
	\end{proof}

	\begin{lem}\label{lemproof2}
		Let $T>r,r_1>0$, then for any $\delta>0$
		\[\max_{H\in \mathfrak H} |\bra {\tilde \Psi} H \ket {\tilde \Psi} -  \bra {\tilde \Phi} H \ket {\tilde \Phi}|\leq \left(\Ppoly(n)+m\right)\gamma\delta\]
		with probability greater than \[1-8 (\Ppoly(n)^2+m\Ppoly(n)^2)  \binom{n}{2}^{2\Vpoly(n)}\hspace{-0.1cm} \left(\frac{48\Vpoly(n)}{\delta}\right)^{2\Vpoly(n)d^4}\hspace{-0.1cm}\binom m k \left(\frac{12}{\delta}\right)^{d^{2k}}\hspace{-0.1cm} \left( \frac{6144s}{d^{n}\delta^2}\right)^{s/2},\]
		where $s= \left\lfloor \left(\frac{r}{1400 n^2\log(d)}\right)^{1/11}\right\rfloor$.
	\end{lem}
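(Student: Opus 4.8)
The plan is to bound the energy difference term‑by‑term, exploiting that the computational branches of $\ket{\tilde\Psi}$ and $\ket{\tilde\Phi}$ differ only in their initial state ($\ket{0^n}$ versus $\ket{0^{n-1}1}$) and are both propagated through at least $r$ Haar‑random two‑qudit gates, so that by \cite{BHH} the two branches become locally indistinguishable and therefore carry essentially the same local energy. First I would expand
\[
\bra{\tilde\Psi}H\ket{\tilde\Psi}-\bra{\tilde\Phi}H\ket{\tilde\Phi}=\frac1{1-\alpha}\sum_{p,p'\in\cR}\alpha_p^*\alpha_{p'}\Bigl(\bra p\bra{\psi_p}H\ket{p'}\ket{\psi_{p'}}-\bra p\bra{\phi_p}H\ket{p'}\ket{\phi_{p'}}\Bigr)
\]
and split the sum into diagonal ($p=p'$) and off‑diagonal ($p\neq p'$) parts. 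By \cref{code} the clock states are orthogonal product states, so for $p\neq p'$ only the local terms $h_Z$ whose support contains some clock qudit $q$ on which $\ket p$ and $\ket{p'}$ differ can contribute, and $\sum_{Z:q\in Z}\|h_Z\|_\infty\leq\gamma$ by \cref{normalisation}; for $p=p'$ every $h_Z$ contributes, but $\sum_Z\|h_Z\|_\infty\leq\sum_Z|Z|\,\|h_Z\|_\infty=\sum_q\sum_{Z\ni q}\|h_Z\|_\infty\leq m\gamma$.

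The single nontrivial ingredient is a concentration estimate for the \emph{difference} of transition amplitudes, an analogue of \cref{lemmaconc1}. For $h\in\mathfrak h_k$ and any $p,p'\in\cR$ put $g_h:=\bra p\bra{\psi_p}h\ket{p'}\ket{\psi_{p'}}-\bra p\bra{\phi_p}h\ket{p'}\ket{\phi_{p'}}$; I would prove, and defer to the Appendix, that
\[
\Pr_{l.r.c.}\Bigl(\max_{h\in\mathfrak h_k}|g_h|\geq\frac1{d^{n/2}}+\delta\Bigr)\leq 4\binom{n}{2}^{2\Vpoly(n)}\Bigl(\frac{48\Vpoly(n)}{\delta}\Bigr)^{2\Vpoly(n)d^4}\binom{m}{k}\Bigl(\frac{12}{\delta}\Bigr)^{d^{2k}}\Bigl(\frac{6144\,s}{d^n\delta^2}\Bigr)^{s/2},
\]
with $s=\bigl\lfloor(r/(1400\,n^2\log d))^{1/11}\bigr\rfloor$. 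The mechanism is that $\ket{\psi_p}=W_p\ket{0^n}$, $\ket{\phi_p}=W_p\ket{0^{n-1}1}$ with $W_p=V_pU_{t_p}\cdots U_1$ (and similarly at $p'$), so each branch absorbs $\geq r$ random bulk gates; by \cite{BHH} these make the branch an approximate unitary design of order $\sim s$, and the bulk average of $\ket{\psi_{p'}}\bra{\psi_p}$ equals that of $\ket{\phi_{p'}}\bra{\phi_p}$ (both vanish when $t_p\neq t_{p'}$, since the extra random gate averages to zero, and both equal $V_{p'}V_p^\dagger/d^n$ up to exponentially small error when $t_p=t_{p'}$), so $\mathbb E[g_h]\approx0$. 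An $s$‑th moment / Markov bound then forces $|g_h|\lesssim d^{-n/2}$ with overwhelming probability, and $\varepsilon$‑nets over $\mathfrak h_k$ (of size $(12/\delta)^{d^{2k}}$ for each of $\binom mk$ supports) and over the junk circuits $V_p,V_{p'}$ (size $\binom{n}{2}^{\Vpoly}(48\Vpoly/\delta)^{\Vpoly d^4}$ each) promote this to the uniform bound. I expect this to be the main obstacle: controlling the correlations between the two branches and between the two history‑state labels $p,p'$ (which share a circuit prefix and the $V$'s) while keeping the moment order $s$ large enough to beat all the union bounds is precisely where the delicate work lies, and is the reason for the polynomial lower bound on $r$ imposed in \cref{amplitudes}.

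Granting this estimate, the rest is bookkeeping. For $p\neq p'$, pick a clock qudit $q$ on which $\ket p$ and $\ket{p'}$ differ and apply the estimate to each $h_Z/\|h_Z\|_\infty$ with $q\in Z$:
\[
\bigl|\bra p\bra{\psi_p}H\ket{p'}\ket{\psi_{p'}}-\bra p\bra{\phi_p}H\ket{p'}\ket{\phi_{p'}}\bigr|\leq\sum_{Z:q\in Z}\|h_Z\|_\infty\Bigl(\frac1{d^{n/2}}+\delta\Bigr)\leq\gamma\Bigl(\frac1{d^{n/2}}+\delta\Bigr);
\]
for $p=p'$ the same argument over all $Z$ gives the bound $m\gamma(d^{-n/2}+\delta)$. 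Summing against the weights with $\sum_{p\in\cR}|\alpha_p|^2=1-\alpha$ and $\sum_{p\neq p'\in\cR}|\alpha_p||\alpha_{p'}|\leq\bigl(\sum_{p\in\cR}|\alpha_p|\bigr)^2\leq|\cR|(1-\alpha)\leq\Ppoly(n)(1-\alpha)$, the factor $1/(1-\alpha)$ cancels and gives $|\bra{\tilde\Psi}H\ket{\tilde\Psi}-\bra{\tilde\Phi}H\ket{\tilde\Phi}|\leq(\Ppoly(n)+m)\gamma(d^{-n/2}+\delta)\leq(\Ppoly(n)+m)\gamma\delta$ for $n$ large enough (absorbing the $d^{-n/2}$ term). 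The failure probability is a union bound over the $\leq\Ppoly(n)^2$ off‑diagonal pairs (and at most $m$ clock qudits) and the $\leq\Ppoly(n)$ diagonal terms, each contributing the displayed concentration bound, which reproduces the probability claimed in the lemma.
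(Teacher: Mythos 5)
Your decomposition into diagonal and off-diagonal terms, the use of \cref{code} plus \cref{normalisation} to get a factor of $\gamma$ (resp.\ $m\gamma$), the $\ell_1/\ell_2$ bookkeeping, and the union bound over pairs $(p,p')$ and clock qudits are all exactly the paper's argument, and the proposal to isolate a concentration estimate for the difference of transition amplitudes is precisely \cref{lemmaconc2}. So the route is the same and the structure is sound.

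The one place you would get stuck is the form of your concentration estimate. You propose $\Pr(\max_h |g_h| \geq d^{-n/2}+\delta) \leq \cdots$ with prefactor $4$, arguing that $\mathbb{E}[g_h]\approx 0$ up to exponentially small error, and then absorb the $d^{-n/2}$ into $\delta$ ``for $n$ large enough''. That absorption is not valid, because the lemma must hold for \emph{every} $\delta>0$, including $\delta \ll d^{-n/2}$, and indeed this matters: the tail bound $\bigl(\tfrac{6144s}{d^n\delta^2}\bigr)^{s/2}$ is nontrivial even for some $\delta$ below $d^{-n/2}$. The paper sidesteps this entirely: in \cref{lemmaconc2} the function $f_{h,W}(U)=\bra{0^n}U^\dagger V_p^\dagger h' V_{p'} W U\ket{0^n}-\bra{0^{n-1}1}U^\dagger V_p^\dagger h' V_{p'} W U\ket{0^{n-1}1}$ has Haar mean \emph{exactly} zero by unitary invariance (both input states are pure, so the two averages are identical), not merely approximately zero, so the threshold in the concentration bound is simply $\delta$ with no $d^{-n/2}$ offset. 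This also explains the prefactor $8$ rather than your $4$: with mean zero one must split $f_{h,W}$ into real and imaginary parts and apply Lévy concentration and Low's moment bound to each separately, then union-bound, which doubles the constant. If you replace your approximate-mean argument by this exact symmetry argument (and account for the Re/Im split), your proof matches the paper's.
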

	In order to prove this Lemma we are going to need the following Lemma which will be proven in the Appendix.

	\begin{lem}\label{lemmaconc2}
		Let $\mathfrak h_k=\left\{h\in B(\mathbb C^{d^m}):h=\id_{m-k} \otimes \tilde h_k,\|h\|_\infty \leq 1, h \text { is s.a.} \right\}$ be the set of $k$-local Hermitian operators on $m$ qudits with operator norm $\leq1$. For any $h\in\mathfrak h_k$, let $f_h := \bra p \bra {\psi_p} h \ket {p'} \ket {\psi_{p'}} -  \bra p\bra {\phi_p} h\ket {p'} \ket {\phi_{p'}}$ where $p,p'\geq r$. 
		Then,
		\[\Pr_{l.r.c.} \left(\max_{h\in\mathfrak h_k} |f_h| \geq \delta  \right)
		\leq 8 \binom{n}{2}^{2\Vpoly(n)}\hspace{-0.1cm} \left(\frac{48\Vpoly(n)}{\delta}\right)^{2\Vpoly(n)d^4}\hspace{-0.15cm} \binom m k \left(\frac{12}{\delta}\right)^{d^{2k}}\hspace{-0.15cm} \left( \frac{6144 s}{d^{n}\delta^2}  \right)^{s/2}\hspace{-0.15cm},\]
		where
		\[s = \left\lfloor \left(\frac{r}{1400 n^2\log(d)}\right)^{1/11}\right\rfloor.\]
	\end{lem}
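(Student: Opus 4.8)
\emph{Proof plan for \cref{lemmaconc2}.}
The plan is to rewrite $f_h$ as a balanced, low-degree polynomial in the random gates of the time segment that $\ket{\psi_p},\ket{\phi_p}$ (and $\ket{\psi_{p'}},\ket{\phi_{p'}}$) have in common, and then combine the approximate-unitary-design bound of \cite{BHH} with concentration of measure on the unitary group; this mirrors the mechanism behind \cref{lemmaconc1}, except that the smallness of $f_h$ now comes from the two states being \emph{close} rather than from their \emph{overlap being small}. Assuming without loss of generality $t_{p'}\le t_p$, both history vectors at ``time'' $p$ have the form $\ket{\psi_p}=W_p\ket{0^n}$ and $\ket{\phi_p}=W_p\ket{0^{n-1}1}$ with the \emph{same} operator $W_p:=V_pU_{t_p}\cdots U_1$, and likewise $W_{p'}:=V_{p'}U_{t_{p'}}\cdots U_1$. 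Writing $h_{p,p'}:=\bra p h\ket{p'}$ for the clock block of $h$ on the computational register (a contraction acting on at most $k$ computational qudits), we get $f_h=\operatorname{tr}[h_{p,p'}(\ket{\psi_p}\!\bra{\psi_{p'}}-\ket{\phi_p}\!\bra{\phi_{p'}})]=\operatorname{tr}[W_{p'}^\dagger h_{p,p'}W_p\,\Delta]$, with $\Delta:=\proj{0^n}-\proj{0^{n-1}1}$ traceless and of rank two. Factoring out the common prefix $C:=U_{t_{p'}}\cdots U_1$ — a product of $t_{p'}\ge r$ i.i.d.\ Haar two-local gates — and absorbing everything else into the contraction $M:=V_{p'}^\dagger h_{p,p'}V_pU_{t_p}\cdots U_{t_{p'}+1}$, which is independent of $C$, this becomes $f_h=\operatorname{tr}[M\,C\Delta C^\dagger]$, with $|f_h|\le\|\Delta\|_1=2$.

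Next I would estimate the moments of $f_h$. For $C$ Haar-distributed, $\mathbb{E}[C\Delta C^\dagger]=(\operatorname{tr}\Delta/d^n)\,\1=0$, hence $\mathbb{E}_{\mathrm{Haar}}f_h=0$ for every fixed $M$; and since $C\mapsto\operatorname{tr}[M\,C\Delta C^\dagger]$ is $O(\|M\|_\infty)=O(1)$-Lipschitz in the operator norm on $C$, Levy's lemma makes it subgaussian with variance proxy $O(1/d^n)$, so $\mathbb{E}_{\mathrm{Haar}}|f_h|^{2s'}\le(O(s')/d^n)^{s'}$. The key input is then \cite{BHH}: a local random circuit on $n$ qudits of length $\ge r$ is an $\varepsilon$-approximate unitary $s$-design with $\varepsilon$ exponentially small, where $s=\lfloor(r/(1400\,n^2\log d))^{1/11}\rfloor$. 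Because $f_h^{2s'}$ is a balanced polynomial of degree $(2s',2s')$ in the entries of $C$, taking $2s'\le s$ gives $|\mathbb{E}_{l.r.c.}f_h^{2s'}-\mathbb{E}_{\mathrm{Haar}}f_h^{2s'}|\le\varepsilon\sup|f_h|^{2s'}\le\varepsilon\,4^{s'}$, which is negligible, so $\mathbb{E}_{l.r.c.}|f_h|^{2s'}\le(O(s)/d^n)^{s/2}+o(1)$; Markov's inequality then gives $\Pr_{l.r.c.}(|f_h|\ge\delta)\le(O(s)/(d^n\delta^2))^{s/2}$ for each fixed $M$, which is the claimed shape.

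It remains to make the bound uniform in all the data not yet fixed. The estimate above holds for \emph{every} value of the later gates $U_{t_{p'}+1},\dots,U_{t_p}$, so one averages over them instead of discretising. For $h\in\mathfrak h_k$ one covers the choice of $\le k$ active qudits ($\binom{m}{k}$ options) and a $\delta$-net of contractions on $\mathbb C^{d^k}$ ($(O(1)/\delta)^{d^{2k}}$ points); for each junk unitary $V_p$, which by \cref{def:gen_history_state} is produced by $\le\Vpoly(n)$ two-qudit gates, one covers the gate locations ($\binom{n}{2}^{\Vpoly(n)}$ options) and a $\delta$-net on $\mathbb U(d^2)$ per gate ($(O(1)/\delta)^{d^4}$ points), squared for $V_p$ and $V_{p'}$. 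Since $f_h$ changes by only $O(\delta)$ when a datum is moved to its nearest net point, a union bound over the product of these nets yields the stated probability estimate; plugging it back into \cref{lemproof2} (together with the $\le\Ppoly(n)^2$ choices of pairs $p,p'$) proves that lemma.

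\emph{Main obstacle.} The real difficulty is that $M$ is generically a \emph{global} operator: the junk circuit $V_p$ and the later random gates spread the originally $k$-local support of $h_{p,p'}$ over all $n$ qudits, so there is no ``local-indistinguishability'' shortcut, and the entire smallness of $f_h$ must be squeezed out of the global Haar-concentration of the length-$r$ prefix. Turning that into a statement about the \emph{actual} random circuit forces one to control exactly how high a polynomial degree ($\sim s$) a circuit of length $r$ can support in the \cite{BHH} approximate-design bound, and that tradeoff is precisely what fixes the exponent $1/11$ and the explicit constant. A secondary subtlety is that the $V_p$ may depend on the (random) circuit and hence cannot be fixed in advance; they are absorbed into the union bound via a net over \emph{all} $\le\Vpoly(n)$-gate circuits, which is what produces the large combinatorial prefactors.
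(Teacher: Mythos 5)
Your overall route matches the paper's proof: factor out the common length-$\geq r$ prefix of the local random circuit, observe that the resulting function of that prefix has Haar-mean zero and is $O(1)$-Lipschitz (the paper does this by splitting into real and imaginary parts; you do it via the clean reformulation $f_h=\operatorname{tr}[M\,C\Delta C^\dagger]$ with $\Delta=\proj{0^n}-\proj{0^{n-1}1}$ traceless), upgrade Haar concentration to the $\epsilon$-approximate $s$-design formed by length-$r$ local random circuits (\cref{cor:main-design}), fix the later gates $W$ and then note the bound is $W$-uniform, and finish with a union bound over $\epsilon$-nets of $\mathfrak h_k$ and of the $\leq\Vpoly(n)$-gate circuits $V_p,V_{p'}$. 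This is exactly \cref{lemmaconc2}'s argument.

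There is one step in your plan that is wrong as stated and would fail: you bound the design error by
\begin{equation*}
\bigl|\mathbb{E}_{\mathrm{l.r.c.}}f_h^{2s'}-\mathbb{E}_{\mathrm{Haar}}f_h^{2s'}\bigr|\;\leq\;\varepsilon\,\sup|f_h|^{2s'}\;\leq\;\varepsilon\,4^{s'}.
\end{equation*}
The BHH definition of $\epsilon$-approximate $s$-design does not control moments in this $\sup$-norm sense. What it controls, via \cref{Lowconcen} (Low's theorem, which is precisely the tool the paper invokes here), is an error term $2\epsilon\,(\alpha(f)+|\mu|)^{2m}$ where $\alpha(f)$ is the $\ell_1$-norm of the coefficients of $f$ as a polynomial in the entries of $C$. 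For the present function the paper bounds $\alpha(f)\leq 2d^{2n}$, which is exponentially larger than $\sup|f_h|\leq 2$. The proof still closes because the $\epsilon$ supplied by \cref{cor:main-design} is itself exponentially small (the paper chooses $\epsilon=\bigl(768s/((2d^{2n})^2d^n)\bigr)^{s/2}$, precisely so that $2\epsilon\,\alpha^{2m}$ is dominated by the Haar term). Your conclusion is correct, but the justification has to pass through Low's theorem with the $\alpha(f)$-bound rather than the naive $\sup$-norm estimate; otherwise the required smallness of $\epsilon$, and hence the $1/11$ exponent and the explicit constant in the definition of $s$, would not be forced out of the argument.
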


	\begin{proof}[Proof of \cref{lemproof2}]
		We have that
		\begin{multline*}
                  |\bra {\tilde \Psi} H \ket {\tilde \Psi} -  \bra {\tilde \Phi} H \ket {\tilde \Phi}|\\
                  \leq\sum_{p,p' \in\cR} |\alpha_p|\,|\alpha_{p'}|\,|\bra p \bra{\psi_p} H \ket {p'} \ket{\psi_{p'}}-\bra p \bra{\phi_p} H \ket {p'} \ket{\phi_{p'}}|,
                \end{multline*}
		For fixed $p\neq p'$, we have that $\bra p \bra{\psi_p} h_Z \ket {p'} \ket{\psi_{p'}}=\bra p \bra{\phi_p} h_Z \ket {p'} \ket{\phi_{p'}}=0$ unless $h_Z$ acts non trivially in the qudits where the encodings of $p$ and $p'$ differ. Let $q$ be one of these qudits, we have that
		\begin{multline*}
                  \left|\bra p \bra{\psi_p} H  \ket {p'} \ket{\psi_{p'}}-\bra p \bra{\phi_p} H \ket {p'} \ket{\phi_{p'}}\right|\\
                  \leq \sum_{Z:q\in Z} \left|\bra p \bra{\psi_p} h_Z  \ket {p'}\ket{\psi_{p'}}-\bra p \bra{\phi_p} h_Z \ket {p'} \ket{\phi_{p'}}\right|
                \end{multline*}

		Now, using \cref{lemmaconc2} we have
		\begin{align}
                  \mspace{20mu}&\mspace{-20mu}
		\nonumber\Pr_{l.r.c.} \left\{\max_{H \in \mathfrak H}\left|\bra p \bra{\psi_p} H \ket {p'} \ket{\psi_{p'}}-\bra p \bra{\phi_p} H \ket {p'} \ket{\phi_{p'}}\right|\geq \gamma\delta\right\} \nonumber\\
		&\nonumber\leq \Pr_{l.r.c.} \Biggl\{\exists \{h_Z\}_Z, \sum_Z {h_Z}\in \mathfrak H : \\
                &\nonumber\mspace{135mu}
               \sum_{Z:q\in Z} \left|\bra p \bra{\psi_p} h_Z  \ket {p'} \ket{\psi_{p'}}-\bra p \bra{\phi_p} h_Z  \ket {p'} \ket{\phi_{p'}}\right|\geq \gamma\delta \Biggr\}\\
		&\nonumber\leq \Pr_{l.r.c.} \Biggl\{\exists \{h_Z\}_Z, \sum_Z {h_Z}\in \mathfrak H : \\
                &\nonumber\mspace{50mu}
                  \sum_{Z:q\in Z} \left|\bra p \bra{\psi_p} h_Z  \ket {p'} \ket{\psi_{p'}}-\bra p \bra{\phi_p} h_Z  \ket {p'} \ket{\phi_{p'}}\right|\geq \sum_{Z:q\in Z} \|h_Z\|_\infty \delta \Biggr\}\\
		&\nonumber\leq \Pr_{l.r.c.} \Biggl\{\exists \{h_Z\}_Z, \sum_Z {h_Z}\in \mathfrak H:  \\
                &\nonumber\mspace{140mu}
                  \max_{Z:q\in Z} \frac{\left|\bra p \bra {\psi_p} h_Z \ket {p'} \ket {\psi_{p'}}-\bra p \bra {\phi_p} h_Z \ket {p'} \ket {\phi_{p'}}\right|}{\|h_Z\|_\infty} \geq \delta \Biggr\}\\
		&\nonumber\leq \Pr_{l.r.c.} \left\{  \max_{h\in \mathfrak h_k} \left|\bra p \bra {\psi_p} h \ket {p'} \ket {\psi_{p'}}-\bra p \bra {\phi_p} h \ket {p'} \ket {\phi_{p'}}\right| \geq \delta \right\}\\
		&\leq 8 \binom{n}{2}^{2\Vpoly(n)} \left(\frac{48\Vpoly(n)}{\delta}\right)^{2\Vpoly(n)d^4} \binom m k \left(\frac{12}{\delta}\right)^{d^{2k}} \left( \frac{6144 s}{d^{n}\delta^2}  \right)^{s/2}. \label{equation4}
		\end{align}
		For a fixed $p=p'>r$, we have that
		\begin{multline*}
                  \left|\bra p \bra{\psi_p} H  \ket p \ket{\psi_p}-\bra p \bra{\phi_p} H \ket p \ket{\phi_p}\right|\\
                  \leq \sum_{q=1}^m \sum_{Z:q\in Z}  \left|\bra p \bra{\psi_p} h_Z  \ket p \ket{\psi_p}-\bra p \bra{\phi_p} h_Z \ket p \ket{\phi_p}\right|.
                \end{multline*}
		Reasoning as before, for a particular $q$, we get
		\begin{multline*}\Pr_{l.r.c.} \left\{\max_{\sum_Z h_Z\in \mathfrak H}\sum_{Z:q\in Z} \left|\bra p \bra{\psi_p} h_Z  \ket p \ket{\psi_p}-\bra p \bra{\phi_p} h_Z  \ket p \ket{\phi_p}\right|\geq \gamma \delta \right\}\\
		\leq 8 \binom{n}{2}^{2\Vpoly(n)} \left(\frac{48\Vpoly(n)}{\delta}\right)^{2\Vpoly(n)d^4} \binom m k \left(\frac{12}{\delta}\right)^{d^{2k}} \left( \frac{6144 s}{d^{n}\delta^2}  \right)^{s/2}.
                \end{multline*}
		Counting the terms of each kind and applying a union bound we get
		\begin{align*}
                  &\Pr_{l.r.c.}\left\{\max_{H \in \mathfrak H}|\bra {\tilde \Psi} H \ket {\tilde \Psi} -  \bra {\tilde \Phi} H \ket {\tilde \Phi}|\geq \left(\sum_{p, p'\in\cR} |\alpha_p||\alpha_{p'}|+ m\sum_{p\in\cR} |\alpha_p|^2 \right)\gamma \delta\right\}\\
		&\mspace{20mu}\leq 8 (m+1)\Ppoly(n)^2  \binom{n}{2}^{2\Vpoly(n)} \left(\frac{48\Vpoly(n)}{\delta}\right)^{2\Vpoly(n)d^4}
                  \binom m k \left(\frac{12}{\delta}\right)^{d^{2k}} \left( \frac{6144s}{d^{n}\delta^2}\right)^{s/2}.
                \end{align*}
		Relating the $l_1$ norm of the vector $\alpha=(\alpha_p)_{p=r+1}^T$ with its $l_2$ norm finishes the proof of \cref{lemproof2}.
	\end{proof}

	Putting together \cref{lemproof1,lemproof2} we get that for any $\delta>0$
	\begin{align*}
          &\Pr_{l.r.c.}\Biggl\{\max_{H \in \mathfrak H} |\bra {\Psi} H \ket {\Psi} -  \bra {\tilde \Phi} H \ket {\tilde \Phi}|\\
          &\mspace{60mu}
            \geq \frac {2\gamma \sum\limits_{p \in \cR_2} \sum\limits_{p' \in \cP \setminus \cR}  |\alpha_p| |\alpha_{p'}|+ 2\gamma \Ppoly (n) (d^{-n/2}+\delta)}{1-\alpha}+\gamma\left(\Ppoly(n)+m\right)\delta \Biggr\}\\
          &\leq \left(16 \Ppoly(n)^2 \left( \frac{96s_1}{d^{n}\delta^2}  \right)^{s_1/2}+8 (m+1)\Ppoly(n)^2 \left( \frac{6144s}{d^{n}\delta^2}\right)^{s/2}\right)\times\\ &\mspace{200mu}
          \binom{n}{2}^{2\Vpoly(n)} \left(\frac{48\Vpoly(n)}{\delta}\right)^{2\Vpoly(n)d^4} \binom m k\left(\frac{12}{\delta}\right)^{d^{2k}}.
        \end{align*}
Taking $\delta=\frac {(1-\alpha)} {(\Ppoly(n)+m)T}$, $r=r_1$, then $s_1\leq s$ and we get
\begin{align*}
\Pr_{l.r.c.}&\Biggr\{\max_{H \in \mathfrak H} |\bra {\Psi} H \ket {\Psi} -  \bra {\tilde \Phi} H \ket {\tilde \Phi}|\\
 &\mspace{80mu}\geq  \frac {2\gamma \sum_{p \in \cR_2} \sum_{p' \in \cP \setminus \cR}  |\alpha_p| |\alpha_{p'}|+ 2\gamma \Ppoly (n)d^{-n/2}}{1-\alpha}+ \frac {3\gamma} {T} \Biggl\}\\
&\leq \left(24 \Ppoly(n)^2 +m\Ppoly(n)^2\right)\left( \frac{6144s(\Ppoly(n)+m)^2T^2}{(1-\alpha)^2d^{n}}\right)^{s/2}\binom{n}{2}^{2\Vpoly(n)}\times\\
&\mspace{80mu} \left(\frac{48\Vpoly(n)(\Ppoly(n)+m)T}{1-\alpha}\right)^{2\Vpoly(n)d^4} \binom m k \left(\frac{12(\Ppoly(n)+m)T}{1-\alpha}\right)^{d^{2k}},
\end{align*}
for a sufficiently large $n$.
Now, in order to show the result it is enough to show that for any finite $k$ and $\gamma$ and for $m$ any polynomial function of $n$, there is an election of parameters such that the previous probability is smaller than 1.

Recall that $T, m,\Vpoly(n),\Ppoly(n), \frac 1 {1-\alpha}$ are all polynomial functions of $n$. If $k \leq \frac{\log n}{2\log d}$, then by \cref{amplitudes}.\labelcref{amplitudes:1} $r(n)\geq  11050 n^2\log(d)\max\{ (4\Vpoly(n)d^4)^{11},n^{\constantC}\}$, and we have that $s=s(n)\geq 4\Vpoly(n)d^4$ and $d^{n s }\geq  n^{\beta d^{2k}}$ for any fixed $\beta$ for sufficiently large $n$. Thus
\[\Pr_{l.r.c.}\left\{\max_{H \in \mathfrak H} |\bra {\Psi} H \ket {\Psi} -  \bra {\tilde \Phi} H \ket {\tilde \Phi}|
\geq  O(1/T) \right\}\leq O(d^{-n^{2}/2})\leq O(d^{-2n}),\]
for a sufficiently large $n$.
This proves \cref{quantitative} which finishes the proof.
\end{proof}

\begin{rem}
	Although \cref{main:k-local1} was stated for Hamiltonians that are $k$-local and this entails implicitly that $k$ is finite, note that the result holds for $k$-local Hamiltonians with $k\leq C \log n$ and $C$ being a constant. This is the only assumption made on $k$ at the end of the proof.
\end{rem}

\begin{rem}\label{Rem2}
	In the proof of \cref{main:k-local1} the initialisation to $\ket {0^{n-1}1}$ of the state $\ket {\tilde \Phi}$ is not important. The same proof holds for any history state initialised in any product state and such that the history of the first $r$ times is deleted. Thus, any of these ``truncated history states'' have energy close to the minimal energy with high probability. Moreover, applying a union bound over all this events we can show that the probability of having a subspace of dimension $d^n$ of energy within $O(T^{-1})$ of the ground states is $1-d^nd^{-2n}=1-d^{-n}$.
\end{rem}

\begin{rem}\label{Rem3}
	The result of \cref{main:k-local1} extends straightforwardly to the case where the history state $\ket \Psi$ is not the ground state but its energy is $O(1/n^\theta)$ close to the minimal energy, that is, $\bra \Psi H\ket \Psi=E_0+O(1/n^\theta)$. In this case the spectral gap will be $O(1/n)$. Indeed, the only place in the proof where we are using that $\ket \Psi$ is the ground state is in the proof of \cref{lemproof1} to bound the energy and the same argument can still be applied with an extra term of (by \cref{amplitudes}.\labelcref{amplitudes:1}) $O(1/n^\theta)/(1-\alpha)=O(1/n)$. Note that at the beginning of the proof of \cref{lemproof1} we are also using that the energy of $\ket \Psi$ does not decrease after deleting part of the history of the circuit. This could happen now, but in this case \cref{lemproof1} is not needed as the decrease in energy will give us a better upper bound for the difference between $\bra {\tilde \Phi} H \ket {\tilde \Phi}$ and the ground state energy, which is the final goal.
\end{rem}

\begin{prop}\label{main:k-local2}
	Consider any mapping from quantum circuits to Hamiltonians such that:
	\begin{enumerate}
		\item the history state of the circuit is the ground state of the Hamiltonian;
		\item the Hamiltonian fulfils the normalisation condition of \cref{normalisation}.
		\item the history state satisfies \cref{code} and \cref{amplitudes}.\labelcref{amplitudes:2}.
	\end{enumerate}
	Then, for an increasing sequence $(n_i)_{i\in \NN}$, there exist circuits of size $T(n)$ for which the spectral gap of the associated Hamiltonian is $O(1/\poly(n))$.
\end{prop}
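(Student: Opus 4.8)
The approach mirrors the proof of \cref{main:k-local1}. We take a local random circuit on $n$ qudits of size $T(n)$, whose history state $\ket\Psi$ is, by hypothesis, the ground state of $H$ for every $H$ in the set $\mathfrak H$ of admissible Hamiltonians; we run a second copy of the circuit from the orthogonal product state $\ket{0^{n-1}1}$, truncate the early part of its history, and show, via \cref{lemmaconc1,lemmaconc2}, that the normalised truncated state $\ket{\tilde\Phi}$ has energy within $O(1/\poly(n))$ of $\bra\Psi H\ket\Psi$ for all $H\in\mathfrak H$ simultaneously, with probability at least $1-d^{-2n}$ over the choice of circuit. The only thing that changes is \emph{where} we cut. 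Truncating at a fixed time $r$, as in \cref{main:k-local1}, is not available here, because \cref{amplitudes}.\labelcref{amplitudes:2} permits a constant (or larger) fraction of the amplitude to sit near the start of the computation, so that deleting the history before time $r$ could collapse the normalisation factor $1/(1-\alpha)$. Instead we use \cref{amplitudes}.\labelcref{amplitudes:2} to pick $x_0$ for which the block of $2r(n)$ consecutive time steps $\cA_{x_0}\cup\cA_{x_0+1}$ (around time $x_0r(n)$) carries total squared amplitude at most $r(n)^{-u(n)}$, while the portions of the history before and after this block each still carry squared amplitude at least $r(n)^{-\kappa}$, and we place the cut at the start of this low-amplitude window, $\tau:=(x_0-1)r(n)$.

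Concretely, put $\cR:=\{p\in\cP:t_p\geq\tau\}$, $\alpha:=\sum_{p\notin\cR}|\alpha_p|^2$, and
\[\ket{\tilde\Psi}=\frac{1}{\sqrt{1-\alpha}}\sum_{p\in\cR}\alpha_p\ket{p}\ket{\psi_p},\qquad \ket{\tilde\Phi}=\frac{1}{\sqrt{1-\alpha}}\sum_{p\in\cR}\alpha_p\ket{p}\ket{\phi_p},\]
with $\ket{\phi_\tau}=U_\tau\cdots U_1\ket{0^{n-1}1}$, $\ket{\phi_t}=U_t\ket{\phi_{t-1}}$ for $t\in\cT$ with $t>\tau$, and $\ket{\phi_p}=V_p\ket{\phi_{t_p}}$ for $p\in\cR\setminus\cT$. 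Then $\ket{\tilde\Phi}\perp\ket\Psi$ exactly, because $\langle 0^n|0^{n-1}1\rangle=0$ and the $U_t$ and $V_p$ are shared, and $\ket{\tilde\Phi}$ depends only on the circuit, not on $H$. One then reruns the two technical steps behind \cref{main:k-local1}. In the analogue of \cref{lemproof1} the ground-state energy cancels exactly as there, leaving
\[\bra{\tilde\Psi}H\ket{\tilde\Psi}-\bra\Psi H\ket\Psi\leq\frac{2}{1-\alpha}\sum_{p\in\cR}\sum_{p'\notin\cR}|\alpha_p|\,|\alpha_{p'}|\,|\bra{p}\bra{\psi_p}H\ket{p'}\ket{\psi_{p'}}| ;\]
matrix elements with $t_p$ inside the window are bounded crudely by $\gamma$ via \cref{code}, and since the window carries squared amplitude at most $r(n)^{-u(n)}$ their total contribution is $O(\gamma\,\poly(n)\,r(n)^{-u(n)/2})$; matrix elements with $t_p\geq\tau+2r(n)$ involve a time separation greater than $2r(n)$ from $t_{p'}$, so the case of \cref{lemmaconc1} with its parameter set to $2r(n)$ makes them $O(\gamma(d^{-n/2}+\delta))$ up to exponentially small failure probability. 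The analogue of \cref{lemproof2} carries over verbatim, since every clock label occurring in $\ket{\tilde\Psi}$ and $\ket{\tilde\Phi}$ has $t_p\geq\tau\geq r(n)$, so that \cref{lemmaconc2} applies and contributes $O(\gamma\,\poly(n)\,\delta)$. Dividing by $1-\alpha$, using the lower bound $1-\alpha\geq r(n)^{-\kappa}$ supplied by \cref{amplitudes}.\labelcref{amplitudes:2}, and summing the exponentially small failure probabilities over the $\poly(n)$ matrix elements by a union bound as in \cref{main:k-local1}, we obtain, for all $H\in\mathfrak H$ simultaneously,
\[\bra{\tilde\Phi}H\ket{\tilde\Phi}-\bra\Psi H\ket\Psi=O(\poly(n)\,r(n)^{\kappa}(r(n)^{-u(n)/2}+\delta))\]
with probability at least $1-d^{-2n}$; since $\ket{\tilde\Phi}\perp\ket\Psi$, this bounds $\Delta(H)$.

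The delicate point, and where the precise numerology of \cref{amplitudes}.\labelcref{amplitudes:2} does the work, is to show this bound is genuinely $O(1/\poly(n))$: the normalisation introduces the factor $r(n)^{\kappa}$, which must not overwhelm the window-smallness $r(n)^{-u(n)/2}$. From $u(n)=\lfloor\log(T(n)/2)/\log r(n)\rfloor-1$ one has $r(n)^{u(n)}\asymp T(n)$ up to $\poly(n)$ factors, and the choice $\kappa=\min\{2u(n)-2,(\theta-1)\log n/\log r(n)\}$ is calibrated so that, for the (large enough) $T(n)$ provided by the assumption, $\kappa=(\theta-1)\log n/\log r(n)$, hence $r(n)^{\kappa}=n^{\theta-1}$ and $r(n)^{\kappa}r(n)^{-u(n)/2}=O(n^{\theta-1}\poly(n)/\sqrt{T(n)})=O(1/\poly(n))$; taking $\delta$ polynomially small then kills the remaining term, while the constraint $r(n)\geq 11050\,n^2\log(d)\,\max\{(4\Vpoly(n)d^4)^{11},n^{\constantC}\}$ makes the design parameters $s,s_1\sim(r(n)/\poly(n))^{1/11}$ large enough that $d^{-ns}$ beats every union-bound prefactor, exactly as at the end of the proof of \cref{main:k-local1}, and tolerates $k\leq C\log n$. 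The remainder I expect to be bookkeeping: counting the matrix elements precisely; handling the labels $p$ with no associated $t_p$ (these are simply discarded together with the rest of the pre-$\tau$ history); and, in the spirit of \cref{Rem2,Rem3}, observing that the same $\ket{\tilde\Phi}$ construction with an arbitrary orthogonal product initialisation yields, by a further union bound, an exponentially large low-energy subspace, while replacing ``$\ket\Psi$ is the ground state'' by ``$\bra\Psi H\ket\Psi$ is within $O(1/n^\theta)$ of the ground-state energy'' merely adds an extra term $O(1/n^\theta)/(1-\alpha)=O(1/n)$.
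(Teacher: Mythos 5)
Your construction is correct in spirit but takes a genuinely different route from the paper's, and the difference is worth noting. The paper does \emph{not} run a second circuit from $\ket{0^{n-1}1}$ and does \emph{not} invoke \cref{lemmaconc2} at all in this proposition. Instead, it works entirely inside the original history state: it writes $\ket\Psi=\sqrt\lambda\ket{\xi_0}+\sqrt{1-\lambda}\ket{\xi_1}$, where $\ket{\xi_0}$ is the normalised piece supported on $\cP\setminus\cB_{x_0}$ (before the cut at time $x_0r$) and $\ket{\xi_1}$ the piece on $\cB_{x_0}$ (after). The variational identity
$E_0=\lambda D_0+(1-\lambda)D_1+2\lambda(1-\lambda)\mathrm{Re}\bra{\xi_0}H\ket{\xi_1}$
is rearranged to $\lambda(D_0-E_0)+(1-\lambda)(D_1-E_0)=-2\mathrm{Re}\sum_{p\notin\cB_{x_0}}\sum_{p'\in\cB_{x_0}}\alpha_p^*\alpha_{p'}\bra{p}\bra{\psi_p}H\ket{p'}\ket{\psi_{p'}}$, and the right side is bounded by splitting the cross terms into the small-amplitude window pairs (both $p,p'$ in $\cA_{x_0}\cup\cA_{x_0+1}$, bounded by $\gamma$) and the well-separated pairs ($|t_p-t_{p'}|>r$, bounded by \cref{lemmaconc1}). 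Since both $D_0-E_0,D_1-E_0\geq 0$ and both $\lambda,1-\lambda\geq r^{-2u+2}$, one gets $D_0-E_0+D_1-E_0\leq O(r^{-1})$, and then the two orthogonal states $\ket{\xi_0},\ket{\xi_1}$ span a two-dimensional subspace whose $H$-eigenvalues both sit within $O(r^{-1})$ of $E_0$ (trace minus minimum). This yields a clean $O(r^{-1})=O(1/\poly(n))$ gap bound with no dependence on how large $T$ is relative to $r$.

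Your route -- importing the full $\ket{0^{n-1}1}$ machinery of \cref{main:k-local1}, truncating at $\tau=(x_0-1)r$, and reusing both \cref{lemmaconc1,lemmaconc2} -- does also work and is an honest alternative, but it trades simplicity for extra baggage: you need \cref{lemmaconc2} (which the paper needs only for \cref{main:k-local1}); your normalisation factor $1/(1-\alpha)\leq r^{\kappa}$ multiplies the window contribution $r^{-u/2}$, so you must separately argue (as you do) that the assumption's numerology forces $\kappa=(\theta-1)\log n/\log r$ and that $T$ is large enough for $n^{\theta-1}\poly(n)/\sqrt{T}$ to be polynomially small. That check is genuinely delicate and makes your final bound less transparent than the paper's $O(1/r)$. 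One compensating advantage of your approach: you only use the lower bound on the ``after'' amplitude $\sum_{p\in\cB_{x_0}}|\alpha_p|^2$, not the one on the ``before'' part, whereas the paper's trace argument needs both $\lambda$ and $1-\lambda$ bounded below -- so in principle your argument covers a slightly broader class of amplitude distributions. A small point of bookkeeping: you claim to cut at $\tau+2r(n)$ to get separation $>2r(n)$; cutting at $\tau+r(n)$ already gives separation $>r(n)$, which is what \cref{lemmaconc1} needs, and matches the paper's $\cF$.
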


\begin{proof}
  Consider a local random circuit over $n$ qudits of size $T$, thanks to \cref{reduce_to_standard} (or \cref{code}) we can assume that the history state of this circuit will be given by
	\[\ket \Psi=\sum_{p \in\cP} \alpha_p \ket p \ket{\psi_p},\]

	Let $H$ be a $k$-local Hamiltonian such that $\ket {\Psi}=\sum_{p\in\cP} \alpha_p \ket p \ket{\psi_p}$ is its ground state. By \cref{amplitudes}.\labelcref{amplitudes:2} there exist a number $x_0$ such that $\sum_{p\in \cA_{x_0}\cup \cA_{x_0+1}}|\alpha_p|^2\leq r^{-u}$, $\lambda_{x_0}:=\sum_{p\in\cP\setminus\cB_{x_0}}|\alpha_p|^2\geq r^{-2u+2}$ and $1-\lambda_{x_0}=\sum_{p\in\cB_{x_0}}|\alpha_p|^2 \geq r^{-2u+2}$ 	where $\cA_1=\{p:\nexists t\in\cT, t\leq p \}\cup \{p:0\leq t_p\leq r-1\}$, $\cA_x=\{p:(x-1)r\leq t_p\leq xr-1\}$ for $x=2,\dots,2r^u$, $\cB_x=\{p: t_p\geq xr\}$ for $x=1,\dots,r^u-1$. With this notation,

	\[\ket {\Psi}=\sqrt \lambda \ket {\xi_0} + \sqrt {1-\lambda} \ket {\xi_1}.\]
	We will show that for any $k$-local Hamiltonian such that $\ket {\Psi}$ is the ground state, then  both $D_0$ the energy of $\ket{\xi_0}$ and $D_1$ the energy of $\ket{\xi_1}$ are $1/\poly(n)$ close to the ground state energy. As $\ket{\xi_0}$ and $\ket{\xi_1}$ are orthogonal, this will automatically imply the result.

	The ground state energy is
	\begin{align}
	\nonumber E_0 &=\bra {\Psi}H\ket {\Psi}\\
	\nonumber &=\lambda \bra {\xi_0}H\ket {\xi_0}+(1-\lambda) \bra {\xi_1}H\ket {\xi_1}+2\lambda(1-\lambda)\mathrm{Re}\left(\bra{\xi_0}H\ket{\xi_1}\right)\\
	\label{gsenergy}&=\lambda D_0+(1-\lambda)D_1 +2\lambda (1-\lambda)\mathrm{Re}\left(\bra{\xi_0}H\ket{\xi_1}\right)\\
	\nonumber &=\lambda (D_0-E_0+E_0)+(1-\lambda)(D_1-E_0+E_0) +2\lambda (1-\lambda)\mathrm{Re}\left(\bra{\xi_0}H\ket{\xi_1}\right)
	\end{align}

	Reordering the terms we get
        \begin{align*}
          \lambda(D_0&-E_0)+(1-\lambda)(D_1-E_0)\\ &=-2\mathrm{Re}\left(\sum_{p\in\cP\setminus\cB_{x_0}}\sum_{p'\in\cB_{x_0}} \alpha_p^* \alpha_{p'} \bra p \bra{\psi_p}H \ket {p'} \ket{\psi_{p'}}\right)\\
	  &\leq 2\sum_{p\in\cP\setminus\cB_{x_0}}\sum_{p'\in\cB_{x_0}}  \left|\alpha_p^* \alpha_{p'} \bra p \bra{\psi_p}H \ket {p'} \ket{\psi_{p'}}\right|.
	\end{align*}
	Repeating the analysis carried out in \cref{lemproof1} the terms of the form $\left|\bra p \bra{\psi_p}H \ket {p'} \ket{\psi_{p'}}\right|$ can be bounded in two different ways, those where $p$ and $p'$ are inside the slice $x_0$ and those where at least one of them is not in the slice $x_0$ and so $|t_p-t_{p'}|>r$. Let $\cF$ denote the set of those $(p,p')$ of the latter case, that is,
	\[\cF=\left\{(p,p'): p\in\cP\setminus\cB_{x_0},p'\in\cB_{x_0}, |t_p-t_p'|>r\right\}.\]
	The terms where $(p,p')\notin\cF$ will be bounded by \cref{equation2}. Thus,
	\begin{align*}\lambda(D_0&-E_0)+(1-\lambda)(D_1-E_0)\\
         &\leq 2\gamma \sum_{p\in \cA_{x_0}}  |\alpha_p| \sum_{p\in \cA_{x_0+1}}|\alpha_{p'}|+2 \sum_{(p,p')\in \cF}  \left|\alpha_p^* \alpha_{p'} \bra p \bra{\psi_p}H \ket {p'} \ket{\psi_{p'}}\right|\\
	&\leq2\gamma r^{-2u+1}+2 \sum_{(p,p')\in \cF}  \left|\alpha_p^* \alpha_{p'} \bra p \bra{\psi_p}H \ket {p'} \ket{\psi_{p'}}\right|.\end{align*}
	The terms where $(p,p')\in\cF$ can be bounded by \cref{equation3}. Overcounting the number of these terms, using the inequalities between the $l_1$ and $l_2$ norms of vectors, and applying a union bound we get that, for any $\delta>0$,
	\begin{align*}
	\Pr_{l.r.c.}&\biggl\{ \max_{H\in \mathfrak H} \{\lambda(D_0-E_0)+(1-\lambda)(D_1-E_0)\}\\
	&\mspace{130mu}\geq \gamma \left(r^{-2u+1}+\Ppoly(n)d^{-n/2}+\Ppoly(n)\delta\right)\biggr\}\\
	&\leq \Pr_{l.r.c.}\left\{ \max_{H\in \mathfrak H}\sum_{(p,p')\in \cF}  \left|\alpha_p^* \alpha_{p'} \bra p \bra{\psi_p}H \ket {p'} \ket{\psi_{p'}}\right|\geq \gamma \Ppoly(n) \left(d^{-n/2}+\delta\right)\right\}\\
	&\leq\Pr_{l.r.c.}\Biggl\{ \max_{H\in \mathfrak H}\sum_{(p,p')\in \cF}  \left|\alpha_p^* \alpha_{p'} \bra p \bra{\psi_p}H \ket {p'} \ket{\psi_{p'}}\right|\\
	&\mspace{130mu} \geq 2\gamma \sum_{(p,p')\in \cF} |\alpha_p| |\alpha_{p'}| \left(d^{-n/2}+\delta\right)\Biggr\}\\
	&\leq\Pr_{l.r.c.}\Biggl\{\exists  (p,p')\in \cF, \,  \max_{H\in \mathfrak H}  \left|\alpha_p^* \alpha_{p'} \bra p \bra{\psi_p}H \ket {p'} \ket{\psi_{p'}}\right|\\
	&\mspace{130mu} \geq\gamma |\alpha_p| |\alpha_{p'}| \left(d^{-n/2}+\delta\right)\Biggr\}\\
	&\leq\sum_{(p,p')\in\cF}\Pr_{l.r.c.}\left\{\max_{H\in \mathfrak H}  \left| \bra p \bra{\psi_p}H \ket {p'} \ket{\psi_{p'}}\right|\geq\gamma \left(d^{-n/2}+\delta\right)\right\}\\
	&\leq 4\Ppoly(n)^2 \binom{n}{2}^{2\Vpoly(n)} \left(\frac{48\Vpoly(n)}{\delta}\right)^{2\Vpoly(n)d^4} \binom m k\left(\frac{12}{\delta}\right)^{d^{2k}} \left( \frac{96s_1}{d^{n}\delta^2}  \right)^{s_1/2},
	\end{align*}
	where $s_1 = \left\lfloor \left(\frac{r_1}{11050 n^2\log(d)}\right)^{1/11}\right\rfloor$.

	Take $\delta(n)=r^{-2u+1}/\Ppoly(n)$. Recall that $T, m,\Vpoly(n),\Ppoly(n)$  are all polynomial functions of $n$. If $k \leq \frac{\log n}{2\log d}$, then by \cref{amplitudes}.\labelcref{amplitudes:2} \[r_1(n)\geq  11050 n^2\log(d)\max\{ (4\Vpoly(n)d^4)^{11},n^{\constantC}\},\] and we have that $s_1(n)\geq 4\Vpoly(n)d^4$ and $d^{n s_1}\geq  n^{\beta d^{2k}}$ for any fixed $\beta$ and sufficiently large $n$.
	Thus
	\begin{equation}\label{eq:k-local2}\Pr_{l.r.c.}\left\{  \max_{H\in \mathfrak H} \{\lambda(D_0-E_0)+(1-\lambda)(D_1-E_0)\}\geq O (r^{-2u+1}) \right\}\leq O(d^{-2n}),\end{equation}
	for a sufficiently large $n$.

	Now, \cref{amplitudes}.\labelcref{amplitudes:2} implies $\lambda,1-\lambda\geq r^{-2u+2}$ and we get $D_0-E_0+D_1-E_0\leq O (r^{-1})$, as we wanted to show.
\end{proof}

\begin{rem}\label{Rem1a}
	Again, the result of \cref{main:k-local2} extends to the case where the history state $\ket \Psi$ is not the ground state but its energy is $O(n^{-\theta})$ close to the minimal energy.
	The only place in the proof where we are using that $\ket \Psi$ is the ground state is in \cref{gsenergy} when considering the energy of in the proof of \cref{lemproof1} to bound the energy $\ket \Psi$, adding an extra term $O(n^{-\theta})$ to the discussion. Thus, \cref{eq:k-local2} is still true adding $O(n^{-\theta})$ to the lower bound inside the probability. Finally, \cref{amplitudes}.\labelcref{amplitudes:2} implies $\lambda,1-\lambda\geq r^{-\kappa}\geq n^{-\theta+1}$. Thus, $D_0-E_0+D_1-E_0\leq O (n^{-1})$ and the result follows.\end{rem}

\section{Quasi-local interactions}\label{sec:quasi-local}
Almost all history state constructions in the literature are for $k$-local Hamiltonians, i.e.\ Hamiltonians with strictly local interactions acting on at most $k$ qudits.
However, our results also extend to the case of quasi-local Hamiltonians with exponentially decaying, i.e.\ Hamiltonians with local interactions that can act on arbitrarily many qudits at once, but whose strength decays sufficiently fast with the number of qudits involved in the interaction.
In this section, we extend all our results to this setting.

\begin{thm}\label{main:quasi-local}
	Consider any mapping from quantum circuits to Hamiltonians such that:
	\begin{enumerate}
		\item the history state of the circuit is the ground state of the Hamiltonian;
		\item the Hamiltonian fulfils the normalisation condition of \cref{quasi-local};
		\item the history state satisfies \cref{code} and \cref{amplitudes}.\labelcref{amplitudes:1}.
\end{enumerate}
  Then, for an increasing sequence $(n_i)_{i\in \NN}$, there exist circuits of size $T(n)$ for which the spectral gap of the associated Hamiltonian is $O(1/T(n))$.
\end{thm}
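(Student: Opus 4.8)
The plan is to re-run the proof of \cref{main:k-local1} essentially verbatim, replacing the $k$-local interaction estimates by quasi-local ones obtained by splitting every interaction sum into a ``low-locality'' and a ``high-locality'' part. As there, fix a local random circuit of size $T$ initialised in $\ket{0^n}$, reduce its generalised history state $\ket\Psi$ to standard form via \cref{reduce_to_standard} so that \cref{code} holds, and form the truncated states
\[\ket{\tilde\Psi}=\tfrac1{\sqrt{1-\alpha}}\sum_{p\in\cR}\alpha_p\ket p\ket{\psi_p},\qquad \ket{\tilde\Phi}=\tfrac1{\sqrt{1-\alpha}}\sum_{p\in\cR}\alpha_p\ket p\ket{\phi_p}\]
exactly as before, with $\ket{\phi_p}$ built from the same circuit initialised in $\ket{0^{n-1}1}$ and the history before time $r$ deleted. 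It again suffices to show that, with probability $\geq 1-d^{-2n}$ over the circuit, $|\bra\Psi H\ket\Psi-\bra{\tilde\Phi}H\ket{\tilde\Phi}|=O(1/T)$ uniformly over all $H\in\mathfrak H$, where $\mathfrak H$ is now defined using the normalisation of \cref{quasi-local} in place of \cref{normalisation}. (The reduction of \cref{reduce_to_standard} at most doubles the locality of each term; since $(2k)^{1+\varepsilon}\leq k^{1+\varepsilon'}$ for large $k$ whenever $\varepsilon'>\varepsilon$, this is absorbed by taking the decay exponent in the hypothesis slightly larger, so it is harmless.)

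Locality enters only through the bounds on $|\bra p\bra{\psi_p}H\ket{p'}\ket{\psi_{p'}}|$ and on the corresponding difference $|\bra p\bra{\psi_p}H\ket{p'}\ket{\psi_{p'}}-\bra p\bra{\phi_p}H\ket{p'}\ket{\phi_{p'}}|$ appearing in \cref{lemproof1,lemproof2}. There one fixes a qudit $q$ at which the clock encodings of $p$ and $p'$ differ and uses $|\bra p\bra{\psi_p}H\ket{p'}\ket{\psi_{p'}}|\leq\sum_{Z:q\in Z}|\bra p\bra{\psi_p}h_Z\ket{p'}\ket{\psi_{p'}}|$. Fix a threshold $k_0=k_0(n)$ and split this sum into $k(Z)\leq k_0$ and $k(Z)>k_0$. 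On the high-locality range, since $e^{k(Z)^{1+\varepsilon}}\geq e^{k_0^{1+\varepsilon}}$, \cref{quasi-local} gives deterministically
\[\sum_{Z:q\in Z,\,k(Z)>k_0}\|h_Z\|_\infty\ \leq\ e^{-k_0^{1+\varepsilon}}\sum_{Z:q\in Z}\|h_Z\|_\infty e^{k(Z)^{1+\varepsilon}}\ \leq\ \gamma\,e^{-k_0^{1+\varepsilon}},\]
so these terms contribute at most $\gamma e^{-k_0^{1+\varepsilon}}$ to each matrix element, and (since $\|h_Z\|_\infty$ bounds the corresponding difference up to a factor $2$) at most $2\gamma e^{-k_0^{1+\varepsilon}}$ in \cref{lemproof2}. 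On the low-locality range every $h_Z$ is $k_0$-local, and because $e^{k^{1+\varepsilon}}\geq1$ forces $\sum_{Z:q\in Z}\|h_Z\|_\infty\leq\gamma$, the union-bound manipulations of \cref{lemproof1,lemproof2} go through unchanged, with \cref{lemmaconc1,lemmaconc2} invoked with $k=k_0$.

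It remains to pick $k_0$. Summing the high-locality error over the $\leq\Ppoly(n)^2$ pairs $(p,p')$ (and, in \cref{lemproof2}, over the $m$ qudits and the $\Ppoly(n)$ diagonal terms) and weighting by amplitudes via the $\ell_1$--$\ell_2$ inequality produces an extra error term of order $\poly(n)\,e^{-k_0^{1+\varepsilon}}$; choosing $k_0=\bigl\lceil(\constantC\log n)^{1/(1+\varepsilon)}\bigr\rceil$ for a large enough constant makes this $O(1/T)$. The point is that, \emph{because} $\varepsilon>0$, this choice satisfies $k_0=o(\log n)$, hence $d^{2k_0}=n^{o(1)}$ and $\binom{m}{k_0}\leq\poly(n)^{k_0}=\exp\!\bigl(o((\log n)^2)\bigr)$; consequently the net-counting factors $\binom{m}{k}(12/\delta)^{d^{2k}}$ and $\binom{n}{2}^{2\Vpoly(n)}(48\Vpoly(n)/\delta)^{2\Vpoly(n)d^4}$ in \cref{lemmaconc1,lemmaconc2} with $k=k_0$ stay sub-exponential in $n$, while the concentration factor $(\,\cdot\,/d^{n})^{s/2}$ is super-exponentially small once $r,r_1\geq 11050n^2\log(d)\max\{(4\Vpoly(n)d^4)^{11},n^{\constantC}\}$. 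So the union bound closes with total failure probability $\leq d^{-2n}$, exactly as in \cref{main:k-local1}, and we conclude $\bra{\tilde\Phi}H\ket{\tilde\Phi}\leq E_0+O(1/T)$ for some circuit of size $T(n)$ and every admissible $H$, giving the claimed $O(1/T(n))$ gap.

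The main obstacle is this balancing act for $k_0$: it must be large enough that $e^{-k_0^{1+\varepsilon}}$ absorbs the polynomially many terms from the high-locality tail, yet small enough that the $d^{2k_0}$-sized $\delta$-nets in \cref{lemmaconc1,lemmaconc2} do not overwhelm the (super-)exponential concentration gain. That a suitable window exists relies essentially on the decay exponent being \emph{strictly} superlinear in $k$ (so that $(\log n)^{1/(1+\varepsilon)}=o(\log n)$); this is also why \cref{quasi-local} is phrased with the weight $e^{k^{1+\varepsilon}}$ rather than a bare exponential $e^{k}$.
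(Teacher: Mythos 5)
Your proof is correct, but it takes a genuinely different route from the paper's, and the difference is worth recording. The paper does not introduce a threshold $k_0$ at all. Instead, it sums over every locality level $k=1,\dots,m$ and, for each $k$, exploits the $e^{k^{1+\varepsilon}}$ weight in \cref{quasi-local} to rescale the effective $\delta$ to $e^{k^{1+\varepsilon}}\delta$ in the concentration and net bounds of \cref{lemmaconc1,lemmaconc2}. This produces a term $p_k$ for each $k$, which the paper controls by a union bound over all $k$, after taking $\delta = e^{-(\log n)^{1+\varepsilon}}$ (quasi-polynomially small, rather than the $1/\poly(n)$ used in the $k$-local case) and $r=11050\,n^{13}\log d$ so that $s\geq n$. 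The key case split there is $k\lessgtr\frac{\log n}{2\log d}$: for small $k$ the $k$-local argument applies, for large $k$ the paper argues that the factor $(12/(e^{k^{1+\varepsilon}}\delta))^{d^{2k}}$ drops below $1$.

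Your hard-cut decomposition at $k_0=\bigl\lceil(\constantC\log n)^{1/(1+\varepsilon)}\bigr\rceil$ buys several things. First, the high-locality tail is disposed of deterministically, with no probabilistic argument and no net, giving an explicit additive error $\gamma e^{-k_0^{1+\varepsilon}}=n^{-\constantC}$ per matrix element; the probabilistic machinery is then invoked only once, at locality $k_0$. Second, you can keep the same polynomially small $\delta$ as in \cref{main:k-local1}, so you do not need the enlarged $r\sim n^{13}$ or the quasi-polynomially small $\delta$. Third, and most importantly, your cut at $k_0=o(\log n)$ guarantees $d^{2k_0}=n^{o(1)}$, so the net-counting factor is sub-exponential regardless of $d$; the paper's intermediate regime $\frac{\log n}{2\log d}<k\lesssim\log n$, where $e^{k^{1+\varepsilon}}\delta$ may still be well below $1$ yet $d^{2k}$ can already be polynomially large, simply does not arise in your argument. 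You correctly identify that $\varepsilon>0$ (strictly superlinear decay exponent) is what opens the window $k_0\in\bigl(\omega(1),o(\log n)\bigr)$; the paper relies on the same fact but less transparently. One small caveat: your parenthetical about the doubling of locality under \cref{reduce_to_standard} is on the right track, but should be stated as shrinking $\varepsilon$ rather than enlarging it, and one should absorb the finitely many small-$k$ violations of $(2k)^{1+\varepsilon''}\leq k^{1+\varepsilon}$ into the constant $\gamma$; the paper leaves this detail implicit.
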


\begin{proof}
  The proof follows the same steps as the proof of \cref{main:k-local1} until the chain of inequalities in \cref{equation3}. A similar chain of inequalities can be derive using the fact that $\left|\bra p \bra{\psi_p} H  \ket {p'} \ket{\psi_{p'}}\right|\leq\sum_{Z:q\in Z} \left|\bra p \bra{\psi_p} h_Z  \ket {p'} \ket{\psi_{p'}}\right|$ and \cref{lemmaconc1}. That is,
  \begingroup
  \allowdisplaybreaks
  \begin{align}
	\nonumber&\Pr_{l.r.c.} \left\{\max_{H\in\mathfrak H} \left|\bra p \bra{\psi_p} H \ket {p'} \ket{\psi_{p'}}\right|\geq \gamma \left(\frac {1}{d^{n/2}}+\delta\right)\right\}\\
	&\nonumber\leq \Pr_{l.r.c.} \left\{\exists \{h_Z\}_Z, \sum_Z {h_Z}\in \mathfrak H :\sum_{Z:q\in Z} \left|\bra p \bra{\psi_p} h_Z  \ket {p'} \ket{\psi_{p'}}\right|\geq \gamma \left(\frac {1}{d^{n/2}}+\delta\right)\right\}\\
	&\nonumber\leq \Pr_{l.r.c.} \left\{\exists \{h_Z\}_Z, \sum_Z {h_Z}\in \mathfrak H : \right.\\
	&\nonumber\mspace{150mu}
	\sum_k\sum_{Z:q\in Z, |Z|=k} \left|\bra p \bra{\psi_p} h_{Z}  \ket {p'} \ket{\psi_{p'}}\right| \\
	&\nonumber\mspace{200mu}
	\left.\geq \sum_k \sum_{Z:q\in Z, |Z|=k} \|h_{Z}\|_\infty e^{k^{1+\varepsilon}}  \left(\frac {1}{d^{n/2}}+\delta\right)\right\} \\
	&\nonumber\leq \Pr_{l.r.c.} \left\{\exists \{h_Z\}_Z, \sum_Z {h_Z}\in \mathfrak H: \right.\\
	&\nonumber\mspace{150mu}
	\exists k:\sum_{Z:q\in Z, |Z|=k} \left|\bra p \bra{\psi_p} h_{Z}  \ket {p'} \ket{\psi_{p'}}\right| \\
	&\nonumber\mspace{200mu}
	\left.\geq \sum_{Z:q\in Z, |Z|=k} \|h_{Z}\|_\infty e^{k^{1+\varepsilon}}  \left(\frac {1}{d^{n/2}}+\delta\right)\right\}\\
	&\nonumber\leq\sum_{k=1}^m \Pr_{l.r.c.} \left\{\exists \{h_Z\}_Z, \sum_Z {h_Z}\in \mathfrak H: \right.\\
	&\nonumber\mspace{150mu}
	\left.  \max_{Z:q\in Z, |Z|=k} \frac{|\bra p \bra {\psi_p} h_{Z} \ket {p'} \ket {\psi_{p'}}|}{\|h_{Z}\|_\infty}
	\geq \frac {1}{d^{n/2}}+e^{ k^{1+\varepsilon}}\delta \right\}\\
	&\nonumber\leq \sum_{k=1}^m\Pr_{l.r.c.} \left\{  \max_{h\in \mathfrak h_k} |\bra p \bra {\psi_p} h \ket {p'} \ket {\psi_{p'}}|  \geq \frac {1}{d^{n/2}}+e^{ k^{1+\varepsilon}}\delta \right\}\\
	&\leq 4 \binom n 2^{\Vpoly(n)} \sum_{k=1}^m  \binom m k \left(\frac{48\Vpoly(n)}{ e^{k^{1+\varepsilon}}\delta}\right)^{2\Vpoly(n)d^4} \left(\frac{12}{e^{ k^{1+\varepsilon}}\delta}\right)^{d^{2k}} \left( \frac{96s_1}{d^{n}e^{2k^{1+\varepsilon}}\delta^2}\right)^{s_1/2}, \label{equation5}
  \end{align}
  \endgroup
  where in the second inequality we are using \cref{quasi-local} and the fact that we can sum over all $Z$ such that $q\in Z$ in two steps, first summing over the $Z$ such that $q\in Z$ and $|Z|=k$ and then summing over $k$. The other inequalities follow from basic facts such as: if we have two sums with the same number of elements, we can pair the terms of these sums so that for a sum to be bigger than the other, at least one of the terms of the sum must be bigger than its pair.

	Following the same reasoning as in the proof of \cref{main:k-local1}, and making an analogous reasoning to the one in \cref{equation5} for the chain of inequalities in \cref{equation4}, we get (taking, $\delta=\frac {(1-\alpha)} {(\Ppoly(n)+m)T}$, $r=r_1$ then $s_1\leq s$)
	\begin{align*}
	p:&=\Pr_{l.r.c.}\Biggr\{\max_{H \in \mathfrak H} |\bra {\Psi} H \ket {\Psi} -  \bra {\tilde \Phi} H \ket {\tilde \Phi}|\\
	&\mspace{80mu}\geq  \frac {2\gamma \sum_{p \in \cR_2} \sum_{p' \in \cP \setminus \cR}  |\alpha_p| |\alpha_{p'}|+ 2\gamma \Ppoly (n)d^{-n/2}}{1-\alpha}+ \frac {3\gamma} {T} \Biggl\}\\
	&\leq \left(24 \Ppoly(n)^2 +m\Ppoly(n)^2\right)   \binom{n}{2}^{2\Vpoly(n)} \times\\
	&\nonumber\mspace{80mu}
	\sum_{k=1}^m  \binom m k \left( \frac{6144s}{d^{n}\delta^2 e^{2k^{1+\varepsilon}}}\right)^{s/2}\left(\frac{48\Vpoly(n)}{e^{k^{1+\varepsilon}}\delta}\right)^{2\Vpoly(n)d^4}  \left(\frac{12}{e^{k^{1+\varepsilon}}\delta}\right)^{d^{2k}}\hspace{-0.2cm},
	\end{align*}
	for a sufficiently large $n$. Taking $\delta=n^{-\log(n)^{\varepsilon}}=e^{-\log(n)^{1+\varepsilon}}$ and $r=11050n^{13}\log d $ we have that $s\geq n$. Take $T\gg r$ a polynomial function of $n$ as big as desired. Recall that $m$ is a polynomial function of $n$.
	Setting $$p_k=\binom m k \left( \frac{6144s}{d^{n}\delta^2 e^{2k^{1+\varepsilon}}}\right)^{s/2}\left(\frac{48\Vpoly(n)}{e^{k^{1+\varepsilon}}\delta}\right)^{2\Vpoly(n)d^4}  \left(\frac{12}{e^{k^{1+\varepsilon}}\delta}\right)^{d^{2k}},$$ we have that:
	\begin{itemize}
		\item If $k\leq \frac{\log n}{2\log d}$, then the reasoning at the end of \cref{main:k-local1} shows $p_k\leq O(d^{-2n})$.
		\item If $k\geq \frac{\log n}{2\log d}$, then $\left(\frac{12}{e^{k^{1+\varepsilon}}\delta}\right)^{d^{2k}}\leq 1$, for sufficiently large $n$ and $p_k\leq O(d^{-2n})$;
	\end{itemize}

	Thus, $p\leq O(d^{-2n})$ and there exist a circuit of size $T$ such that its gap is $O(T^{-1})$.
\end{proof}

Note that \cref{Rem2,Rem3} still hold. That is, we have proven that with probability $1- d^{-n}$ there is a subspace of dimension $d^n$ which has energy $O(T^{-1})$ for $n$ and $T$ sufficiently large. Moreover, if the history state energy is not the minimal energy but it is $1/\poly (n)$ close to it then there is a subspace of dimension $d^n$ which has energy $O(1/\poly(n))$.

\section{Conclusions}
In the classical complexity literature, the idea that hard instances of SAT problems occur near phase transitions has been extensively explored in the context of random-SAT problems\cite{Selman}, though proving this rigorously is challenging\cite{Naor}.
In the quantum setting, a paper by Brandao and Harrow\cite{Brandao-Harrow} proves (amongst many other results) that an analogous intuition is true in a different context: any quantum circuit that achieves a computational speed-up over classical algorithms must at some point during the computation produce a quantum state that is ``critical'', in the sense that it has long-range correlations.
In the context of adiabatic quantum computation, Gant and Somma~\cite{Somma} used query complexity bounds to prove that the spectral gap along an adiabatic path must close as $O(1/n)$ for Feynman-Kitaev-style Hamiltonian constructions.
Our results lend more support to the intuition that computational complexity is related to criticality of Hamiltonians.

Similar results to ours can be proven for standard history states (\cref{eq:vanilla_history_state}) using techniques inspired by classical Markov chain theory~\cite{Crosson}.
The two proof approaches are very different -- purely combinatorial rather than applying the probabilistic method -- and consequently have complementary strengths and weaknesses.
The history states we show here to have gapless Hamiltonians, are sequences of states encoding the evolution of large, random quantum circuits.
Whereas the techniques of Crosson and Bowen~\cite{Crosson} can be used to prove spectral gap bounds for \emph{individual} history states, and with much weaker requirements on the form of the encoded circuit.
On the other hand, our probabilistic techniques are less sensitive to the form of the Hamiltonian or the structure of the history state.
So they extend to more general types of history state (\cref{def:gen_history_state}), necessary to cover some of the existing constructions in the literature, and to states with very non-uniform amplitudes.
And to larger classes of Hamiltonians, including some forms of quasi-local interactions.
It would be interesting to see if the two proof techniques can be combined in some way to get the best of both.

Our results do not rule out the possibility of a completely different way of constructing computationally hard quantum Hamiltonians with large spectral gaps.
But they do rule out any possible construction based on the only known technique in Hamiltonian complexity.
Furthermore, our proof rests on the general property that quantum circuits can quickly produce states that are locally hard to distinguish from random states.
This property would seem to represent a significant obstruction to achieving a large spectral gap with an alternative approach.
But as John Bell quipped, ``what is proved by the impossibility proofs is lack of imagination''.
We hope our results will encourage people to find alternatives to history state constructions and obsolete our results!

\clearpage

\appendix

\section{Appendix: Technical lemmas}

A key ingredient in the proof of \cref{lemproof1,lemproof2} is the following Theorem by Brandao, Harrow and Horodecki.

\begin{thm}\cite[Corollary 6]{BHH}\label{cor:main-design}
  Local random circuits of length \[425  n \lceil\log_d(4s)\rceil^2 d^2 s^5 s^{3.1/\log{(d)}}(2ns\log(d) + \log(1/\epsilon))\] form $\epsilon$-approximate $s$-designs.
\end{thm}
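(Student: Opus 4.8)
The plan is to pass to the $s$-th \emph{moment operator} and reduce the whole statement to a spectral-gap estimate for a single step of the circuit, then iterate. For a probability measure $\nu$ on $\mathbb{U}(d^n)$, let $M^{(s)}_\nu := \mathbb{E}_{U\sim\nu}\bigl[U^{\otimes s}\otimes\bar U^{\otimes s}\bigr]$, acting on $(\mathbb{C}^{d^n})^{\otimes 2s}$. A standard fact is that if $\bigl\|M^{(s)}_\nu - M^{(s)}_{\mathrm{Haar}}\bigr\|_\infty \le \epsilon\, d^{-2ns}$ then $\nu$ is an $\epsilon$-approximate $s$-design in the operationally meaningful (diamond-norm) sense; the $d^{-2ns}$ is the unavoidable cost of converting operator-norm closeness of moments into channel closeness, and it is precisely what produces the additive $2ns\log d$ term inside the stated length. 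For a length-$\ell$ local random circuit, $\nu$ is an $\ell$-fold convolution of the one-step measure, so $M^{(s)}_\nu = \bigl(M^{(s)}_{\mathrm{step}}\bigr)^{\ell}$ with $M^{(s)}_{\mathrm{step}} = \tfrac{1}{n-1}\sum_{i=1}^{n-1} P_{i,i+1}$, where $P_{i,i+1}$ is the orthogonal projector onto the commutant of $\{V^{\otimes s}\otimes\bar V^{\otimes s}: V\in\mathbb{U}(d^2)\}$ on sites $i,i+1$, tensored with the identity elsewhere. Since adjacent two-local gates generate $\mathbb{U}(d^n)$, one has $M^{(s)}_{\mathrm{Haar}} = $ projector onto $\bigcap_i \operatorname{ran} P_{i,i+1}$, so $H := I - M^{(s)}_{\mathrm{step}}$ is a \emph{frustration-free} local Hamiltonian with exactly that common kernel; as $M^{(s)}_{\mathrm{step}}$ is an average of projectors, it is positive semidefinite, and $\bigl\|M^{(s)}_\nu - M^{(s)}_{\mathrm{Haar}}\bigr\|_\infty = \lambda_2\bigl(M^{(s)}_{\mathrm{step}}\bigr)^{\ell} \le \bigl(1-\Delta(H)\bigr)^{\ell}$. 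Hence it suffices to show $\Delta(H)\ge \delta_0$ for a suitable $\delta_0 = 1/\mathrm{poly}(n,s,d)$ and then take $\ell \ge \delta_0^{-1}\bigl(2ns\log d + \log(1/\epsilon)\bigr)$.

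The heart of the argument is the lower bound on $\Delta(H)$, which I would get by a local-to-global estimate of Nachtergaele-martingale (equivalently detectability-lemma) type. Group the chain into overlapping blocks of $\ell_0 := \lceil\log_d(4s)\rceil$ consecutive sites — the smallest block size for which the effective local dimension satisfies $d^{\ell_0}\ge 4s > s$, so that the relevant commutants become ``generic'' and are spanned by permutation operators of $s$ elements (Schur--Weyl duality). The martingale argument then yields $\Delta(H) \ge \tfrac{c}{n}\,\Delta_{2\ell_0}$ for an absolute constant $c$, where $\Delta_{2\ell_0}$ is the gap of the analogous operator on $2\ell_0$ adjacent sites; the $1/n$ is the characteristic loss of the frustration-free local-to-global machinery on a chain, and the overlap bookkeeping is otherwise routine. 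The base case $\Delta_{2\ell_0}$ is estimated by restricting $\sum P_{i,i+1}$ to the span of the permutation operators and recognising it, up to normalisation, as (essentially) the transition operator of a random walk on the symmetric group $S_s$ generated by adjacent transpositions; known spectral-gap bounds for that walk, together with an induction on block size, give $\Delta_{2\ell_0} = \Omega\bigl(1/(\ell_0^{2}\, s^{5}\, s^{3.1/\log d}\, d^{2})\bigr)$, the peculiar $s^{3.1/\log d}$ and $d^{2}$ factors being artefacts of comparing the walk's gap across different local dimensions. Combining $\delta_0^{-1} = O\bigl(n\,\ell_0^{2}\, d^{2}\, s^{5}\, s^{3.1/\log d}\bigr)$ with $\ell \ge \delta_0^{-1}(2ns\log d + \log(1/\epsilon))$ reproduces the stated length $425\, n \lceil\log_d(4s)\rceil^{2} d^{2} s^{5} s^{3.1/\log d}(2ns\log d + \log(1/\epsilon))$.

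The main obstacle is the base-case gap estimate and its propagation. The projectors $P_{i,i+1}$ genuinely fail to commute, so a naive union bound over local terms is worthless and one must invoke the frustration-free local-to-global theorem and verify its hypotheses (uniqueness of the block ground space, the overlap structure of consecutive blocks). The commutant dimension jumps discontinuously as $d$ crosses $s$, which forces the $\lceil\log_d(4s)\rceil$ block size and a separate treatment of the small-$d$ regime; and extracting an explicit polynomial gap for the adjacent-transposition walk on $S_s$ and threading it through the block-size induction without losing more than polynomial factors is where essentially all the technical work sits. Finally, one must be careful to apply the $d^{-2ns}$ conversion to the \emph{right} notion of approximate design, since a weaker (e.g.\ Frobenius-norm) definition would change the exponent and hence the final bound.
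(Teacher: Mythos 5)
This theorem is not proven in the paper: it is quoted verbatim as Corollary~6 of \cite{BHH} and used as a black box in the appendix proofs of \cref{lemmaconc1,lemmaconc2}. So there is no proof of the paper's own to compare against; the relevant benchmark is the BHH argument itself.

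Against that benchmark, your high-level architecture is right: reduction to the $s$-th moment operator, the observation that $H = I - M^{(s)}_{\mathrm{step}}$ is a frustration-free local Hamiltonian whose kernel is the global commutant, the decay $\bigl\|M^{(s)}_\nu - M^{(s)}_{\mathrm{Haar}}\bigr\|_\infty \le (1-\Delta)^\ell$, the $d^{-2ns}$ conversion from moment-operator norm to the diamond-norm notion of $\epsilon$-approximate $s$-design (which is what produces the $2ns\log d$ term), and Nachtergaele's martingale method to reduce the chain gap to a block of $\lceil\log_d(4s)\rceil$ sites. All of that is what BHH do. What does not match BHH -- and is moreover not internally consistent -- is your base case. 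You propose to restrict $M^{(s)}_{\mathrm{step}}$ to the span of permutation operators and read off the gap as that of the adjacent-transposition walk on $S_s$. But when $d^{\ell_0}\ge 4s$ the span of permutation operators on the block is, to good approximation, precisely the \emph{ground space} of $H$; the spectral gap is governed by states orthogonal to that span, so the restriction you describe never sees the first excited state and cannot produce a gap estimate. BHH's actual small-block bound runs differently: a comparison between sequential and parallel local circuits, a gap bound for a single random two-qudit gate at effective local dimension $\geq 4s$ obtained by analysing the Gram matrix of permutation operators in the approximately-orthogonal regime, and an induction on $s$ -- which is where the peculiar $s^{3.1/\log d}$ factor originates, not from a Cayley-graph gap on $S_s$. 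That step, which you yourself flag as the main obstacle, is the one that would have to be replaced for your reconstruction to go through.
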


In the proof of \cref{lemmaconc1,lemmaconc2} we make use of the following Lemma due to Low which states that if one have a polynomial function over the unitary group with concentration with respect to Haar measure, then it will have a similar concentration when considering $\epsilon$-approximate unitary $s$-designs. As we are taking the definition of $\epsilon$-approximate $s$-designs from \cite{BHH} that differs from Low by a normalising factor, the following version of the Lemma is the one appearing in \cite{BHH}.
\begin{lem}[Low, Theorem 1.2 of \cite{Low09}]\label{Lowconcen}
  Let $f : \mathbb{U}(D) \rightarrow \mathbb{R}$ be a polynomial of degree $K$. Let $f(U) = \sum_i \alpha_i M_i(U)$ where $M_i(U)$ are monomials and let $\alpha(f) = \sum_i |\alpha_i|$. Suppose that $f$ has probability concentration
	\begin{equation*}
	\Pr_{U \sim \mu_{\text{Haar}}} \left\{ |f(U) - \mu| \geq \delta  \right\} \leq C e^{-a \delta^{2}},
	\end{equation*}
	and let $\nu_s$ be an $\epsilon$-approximate unitary $s$-design. Then for any integer $m$ with $2mK \leq s$,
	\[\Pr_{U \sim \nu_s} \left\{ |f(U) - \mu| \geq \delta  \right\} \leq v\frac{1}{\delta^{2m}} \left( C \left( \frac{m}{a}  \right)^{m} + 2\epsilon (\alpha + |\mu|)^{2m}  \right).\]
\end{lem}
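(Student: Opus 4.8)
The plan is to convert the Gaussian tail bound under the Haar measure into a high-moment bound, transport that moment bound to the approximate design (where only \emph{bounded-degree} polynomial expectations are controlled), and then run Markov's inequality. First I would fix an integer $m$ with $2mK\leq s$ and apply Markov's inequality to the even power $(f(U)-\mu)^{2m}$, which is legitimate because $f$ is real-valued so this quantity is nonnegative:
\[
\Pr_{U\sim\nu_s}\{|f(U)-\mu|\geq\delta\}\ \leq\ \frac{1}{\delta^{2m}}\,\mathbb{E}_{U\sim\nu_s}\bigl[(f(U)-\mu)^{2m}\bigr].
\]
The structural point driving everything is that if $f$ has degree $K$ in the entries (and their conjugates) of $U$, then $(f-\mu)^{2m}$ is a polynomial of degree $2mK\leq s$, so its expectation is a quantity an $s$-design is required to approximate.

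Next I would expand $(f-\mu)^{2m}=\sum_j\beta_jM_j(U)$ into monomials and bound the coefficient $\ell_1$-norm: since $f=\sum_i\alpha_iM_i$ and $-\mu$ is a constant term, the triangle inequality after multiplying out gives $\sum_j|\beta_j|\leq(\alpha(f)+|\mu|)^{2m}$. Invoking the definition of an $\epsilon$-approximate $s$-design in the normalization of \cite{BHH} — which controls $|\mathbb{E}_{\nu_s}[g]-\mathbb{E}_{\mathrm{Haar}}[g]|$ by (a fixed multiple of) $\epsilon$ times the coefficient $\ell_1$-norm of $g$, for every balanced polynomial $g$ of degree $\leq s$ — I get
\[
\Bigl|\mathbb{E}_{\nu_s}\bigl[(f-\mu)^{2m}\bigr]-\mathbb{E}_{\mathrm{Haar}}\bigl[(f-\mu)^{2m}\bigr]\Bigr|\ \leq\ 2v\,\epsilon\,(\alpha(f)+|\mu|)^{2m},
\]
where the constant $v$ (and the factor $2$) absorb the conversion between Low's monomial-design convention and the channel-twirl convention of \cite{BHH}.

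It then remains to bound the Haar moment, which I would do by writing it as a tail integral and feeding in the assumed concentration:
\[
\mathbb{E}_{\mathrm{Haar}}\bigl[(f-\mu)^{2m}\bigr]=\int_0^\infty 2m\,t^{2m-1}\,\Pr_{\mathrm{Haar}}\{|f-\mu|\geq t\}\,dt\ \leq\ C\int_0^\infty 2m\,t^{2m-1}e^{-at^2}\,dt=C\,\frac{m!}{a^m}\ \leq\ C\Bigl(\frac{m}{a}\Bigr)^m,
\]
using the substitution $u=at^2$ and $m!\leq m^m$. Plugging this and the design-error estimate into the Markov bound, and noting $\mathbb{E}_{\nu_s}[(f-\mu)^{2m}]\leq \mathbb{E}_{\mathrm{Haar}}[(f-\mu)^{2m}]+2v\epsilon(\alpha(f)+|\mu|)^{2m}$, yields exactly the claimed inequality $\Pr_{U\sim\nu_s}\{|f(U)-\mu|\geq\delta\}\leq v\delta^{-2m}\bigl(C(m/a)^m+2\epsilon(\alpha+|\mu|)^{2m}\bigr)$.

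The step I expect to be the main obstacle is the design-replacement inequality with the \emph{exactly} right constant: one must verify that "$\epsilon$-approximate $s$-design" in the precise sense used in \cite{BHH} really does bound $|\mathbb{E}_{\nu_s}[g]-\mathbb{E}_{\mathrm{Haar}}[g]|$ by a controlled multiple of $\epsilon\cdot\alpha(g)$ for all degree-$\leq s$ polynomials $g$ — this is what pins down the factor $v$ in the statement, and is where the normalization differences between Low's paper and \cite{BHH} have to be reconciled carefully. The remaining ingredients (the Markov step, the coefficient bound $\sum_j|\beta_j|\leq(\alpha(f)+|\mu|)^{2m}$, and the tail-integral evaluation of the Haar moment) are routine, though one should double-check that "degree" throughout refers to \emph{balanced} monomials with equal numbers of $U_{ij}$ and $\overline{U_{ij}}$ factors, since that is the notion for which $s$-designs reproduce moments.
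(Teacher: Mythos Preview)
The paper does not prove this lemma at all: it is quoted verbatim as Theorem~1.2 of Low~\cite{Low09} (in the normalisation of~\cite{BHH}) and used as a black box in the proofs of \cref{lemmaconc1,lemmaconc2}. So there is no ``paper's own proof'' to compare against.

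That said, your sketch is precisely Low's argument and is correct in outline: Markov on $(f-\mu)^{2m}$, the coefficient bound $\alpha\bigl((f-\mu)^{2m}\bigr)\le(\alpha(f)+|\mu|)^{2m}$, design replacement of the degree-$2mK\le s$ expectation, and the tail-integral evaluation $\int_0^\infty 2mt^{2m-1}Ce^{-at^2}\,dt=Cm!/a^m\le C(m/a)^m$. You have also correctly identified the only nontrivial point, namely pinning down the constant in the design-replacement step under the \cite{BHH} normalisation (this is exactly the ``normalising factor'' the paper alludes to just before stating the lemma, and is what the stray $v$ in the statement is absorbing).
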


We say that a set $\cN \subset \cS$ is an $\epsilon$-net of $\cS$ with respect to the distance $d$ if for every $x \in\cS$ there exists $y \in \cN$ such that $d(x,y)\leq \epsilon$. Standard arguments show \cite{MS} that
\begin{lem}\label{eps-net-hamil}
	There exists an $\epsilon$-net, $\cN_k$, of size
	$ |\cN_k|\leq \binom{m}{k} \left(\frac{3}{\epsilon}\right)^{d^{2k}}$
	of the set $\mathfrak h_k$
	with respect to the distance induced by $\|\cdot\|_\infty$.

\end{lem}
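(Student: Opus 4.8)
The plan is a classical volumetric covering argument: the factor $\binom mk$ will account for the choice of which $k$ of the $m$ qudits the interaction is supported on, and the factor $(3/\epsilon)^{d^{2k}}$ will come from covering the closed unit ball of the $d^{2k}$-dimensional \emph{real} vector space of self-adjoint operators on those $k$ qudits. First I would reduce to a fixed choice of support: for $S\subseteq\{1,\dots,m\}$ with $\abs S=k$, let $\mathfrak h_S\subseteq\mathfrak h_k$ consist of the operators $h=\id_{S^c}\ox\tilde h_S$ with $\tilde h_S$ self-adjoint on $(\C^d)^{\ox k}\simeq\C^{d^k}$ and $\|\tilde h_S\|_\infty\leq1$. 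Since $\|\id_{S^c}\ox(\tilde h_S-\tilde h'_S)\|_\infty=\|\tilde h_S-\tilde h'_S\|_\infty$, the map $h\mapsto\tilde h_S$ is an isometry for $\|\cdot\|_\infty$ between $\mathfrak h_S$ and the closed unit ball of the real normed space of self-adjoint operators on $\C^{d^k}$, whose real dimension is $(d^k)^2=d^{2k}$. So it suffices to build, for each such $S$, an $\epsilon$-net $\cN_S$ of that unit ball, and then set $\cN_k:=\bigcup_{\abs S=k}\{\id_{S^c}\ox y:y\in\cN_S\}$; every $h\in\mathfrak h_k$ is supported on some $S$ of size exactly $k$ (enlarging its support if necessary) and is therefore within $\epsilon$ in $\|\cdot\|_\infty$ of a member of $\cN_k$, while $\abs{\cN_k}\leq\binom mk\max_{\abs S=k}\abs{\cN_S}$.

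The second ingredient is the standard estimate that the closed unit ball $B$ of any $N$-dimensional real normed space admits an $\epsilon$-net of cardinality at most $(1+2/\epsilon)^N$. I would obtain it by a greedy packing argument: take a maximal $\epsilon$-separated subset $\{y_i\}\subseteq B$, which is an $\epsilon$-net by maximality; then the translates $y_i+\tfrac\epsilon2 B$ are pairwise disjoint and contained in $(1+\tfrac\epsilon2)B$, so comparing Lebesgue volumes gives $\#\{y_i\}\,(\epsilon/2)^N\leq(1+\epsilon/2)^N$, hence $\#\{y_i\}\leq(1+2/\epsilon)^N\leq(3/\epsilon)^N$ for $\epsilon\leq1$ (alternatively one may simply cite \cite{MS}). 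Applying this with $N=d^{2k}$ yields $\abs{\cN_S}\leq(3/\epsilon)^{d^{2k}}$ for every $S$, and combining with the first paragraph gives $\abs{\cN_k}\leq\binom mk(3/\epsilon)^{d^{2k}}$, exactly as stated.

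There is no genuine obstacle here; the only points that need (mild) care are the dimension count — that the self-adjoint operators on $\C^{d^k}$ form a real vector space of dimension $d^{2k}$, not $2d^{2k}$, because the condition $A=A^\dagger$ imposes $d^{2k}$ real constraints on a general complex $d^k\times d^k$ matrix — the observation that an $\epsilon$-net on the $k$-qudit tensor factor lifts verbatim to an $\epsilon$-net of the global $k$-local operators, since the identity factor on $S^c$ contributes nothing to the operator-norm distance, and the harmless standing assumption $\epsilon\leq1$ needed to pass from $(1+2/\epsilon)^N$ to $(3/\epsilon)^N$ (for $\epsilon>1$ a single-point net already works).
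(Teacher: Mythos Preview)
Your proposal is correct and is precisely the standard volumetric argument the paper has in mind: the paper does not spell out a proof at all, merely writing ``Standard arguments show \cite{MS}'' before stating the lemma. Your decomposition by support $S$ together with the $(1+2/\epsilon)^N\leq(3/\epsilon)^N$ covering bound for the real $d^{2k}$-dimensional unit ball of self-adjoint operators is exactly that standard argument.
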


\begin{lem}\label{eps-net-circuits}
  There exists an $\epsilon$-net, $\cM_r$ of size $ |\cM_r| \leq \binom{n}{2}^r \left(\frac{6r}{\epsilon}\right)^{rd^4}$ of the set $\cC_r$ of circuits on $n$ qubits comprised of $\leq r$ two-qubit gates with respect to the  diamond norm.
\end{lem}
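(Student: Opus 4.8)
The plan is to discretise a circuit slot by slot. Every $C\in\cC_r$ is specified by an ordered list of at most $r$ gates, each gate being a choice of a pair of qudits it acts on --- one of $\binom{n}{2}$ possibilities --- together with a two-qudit unitary in $\mathbb U(d^2)$; the only continuous data is the choice of unitary at each slot. So the idea is to replace each gate's unitary by a nearby element of a fixed finite net of $\mathbb U(d^2)$, keep the (already discrete) qudit-pair choice exact, and pad every circuit out to exactly $r$ gates by appending identities. The resulting finite family of circuits will be $\cM_r$.

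Two ingredients are needed. First, a standard volumetric $\epsilon'$-net $\mathcal N\subset\mathbb U(d^2)$ with respect to $\|\cdot\|_\infty$ of size $\abs{\mathcal N}\le(3/\epsilon')^{d^4}$ containing $\1$: this is the same elementary packing argument behind \cref{eps-net-hamil}, using that $\mathbb U(d^2)$ sits in a ball of a $d^4$-real-dimensional space and citing \cite{MS}. Second, the two basic stability facts for the diamond norm: for unitary channels $\|\mathcal U-\widetilde{\mathcal U}\|_\diamond\le2\|U-\widetilde U\|_\infty$ (from $U\rho U^\dagger-\widetilde U\rho\widetilde U^\dagger=U\rho(U-\widetilde U)^\dagger+(U-\widetilde U)\rho\widetilde U^\dagger$, which survives stabilisation), together with $\|\mathrm{id}\otimes\mathcal X\|_\diamond=\|\mathcal X\|_\diamond$ (so that replacing one gate in an $n$-qudit circuit costs only the diamond distance of the two two-qudit gates) and subadditivity under composition, $\|\mathcal U_r\cdots\mathcal U_1-\widetilde{\mathcal U}_r\cdots\widetilde{\mathcal U}_1\|_\diamond\le\sum_{i=1}^r\|\mathcal U_i-\widetilde{\mathcal U}_i\|_\diamond$, which follows by telescoping using that CPTP maps have diamond norm $1$.

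Combining these, if $\widetilde C$ is obtained from $C$ (padded to exactly $r$ gates) by replacing each gate's unitary with the nearest point of $\mathcal N$ on the same qudit pair, then $\|C-\widetilde C\|_\diamond\le\sum_{i=1}^r 2\|U_i-\widetilde U_i\|_\infty\le 2r\epsilon'$. Taking $\epsilon'=\epsilon/(2r)$ makes this $\le\epsilon$ and gives $\abs{\mathcal N}\le(3/\epsilon')^{d^4}=(6r/\epsilon)^{d^4}$. Since each of the $r$ slots is chosen independently from $\binom{n}{2}$ qudit pairs and at most $(6r/\epsilon)^{d^4}$ net unitaries, the family of discretised circuits has size $\le\bigl(\binom{n}{2}(6r/\epsilon)^{d^4}\bigr)^{r}=\binom{n}{2}^{r}\left(\frac{6r}{\epsilon}\right)^{rd^4}$, which is the claimed bound, and it is a subset of $\cC_r$ by construction. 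The only point requiring care is fixing the constant in the diamond-versus-operator-norm comparison and the linear accumulation of error across the $r$ layers; the identity-padding disposes of circuits with fewer than $r$ gates, after which the counting is immediate.
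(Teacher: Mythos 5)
Your approach is the standard one and---since the paper states this lemma without proof, citing only ``standard arguments''---presumably what the authors had in mind: pad every circuit to exactly $r$ gates, keep the already-discrete qudit-pair choices exact, replace each two-qudit unitary by a representative from a fixed net of $\mathbb U(d^2)$, and bound the accumulated error by $2r\epsilon'$ via the unitary-channel comparison $\|\mathcal U-\widetilde{\mathcal U}\|_\diamond\le 2\|U-\widetilde U\|_\infty$ together with the telescoping estimate (using that CPTP maps have diamond norm $1$); the choice $\epsilon'=\epsilon/(2r)$ then reproduces the stated count $\binom{n}{2}^{r}\bigl(6r/\epsilon\bigr)^{rd^4}$.

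The one place you should state the argument more carefully is the net of $\mathbb U(d^2)$. You justify a net of size $(3/\epsilon')^{d^4}$ ``by the same elementary packing argument'' as for $\mathfrak h_k$, saying that $\mathbb U(d^2)$ ``sits in a ball of a $d^4$-real-dimensional space.'' That is not right as written: the ambient space of $d^2\times d^2$ complex matrices is $2d^4$-real-dimensional, and $\mathbb U(d^2)$ is a curved $d^4$-dimensional \emph{submanifold} of it, not a ball, so the ambient volumetric argument overshoots to exponent $2d^4$. The exponent $d^4$ comes from the intrinsic dimension of the group: parametrise $\mathbb U(d^2)$ via $H\mapsto e^{iH}$ from the $d^4$-real-dimensional Hermitian ball $\{H=H^\dagger:\|H\|_\infty\le\pi\}$ (this map is surjective and $1$-Lipschitz, since $\|e^{iH}-e^{iH'}\|_\infty\le\|H-H'\|_\infty$), and push forward an $\epsilon'$-net of that ball. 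This gives the correct exponent; the constant comes out as $3\pi$ rather than $3$, so strictly the $6$ in the lemma's bound should be $6\pi$, but that is immaterial to every downstream use of the lemma. With that repair, the rest of your argument goes through as written.
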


\begin{proof}[Proof of \cref{lemmaconc1}]
Suppose, without loss of generality, that $t_p- {t_{p'}} \geq r_1$. We start by considering the state $\ket{\psi_{t'}}$ fixed, formally, we will be just considering the probability of the function $f_h$ being bigger than a quantity conditioning on the event that whenever the time register is in the state $\ket {p'}$ the state of the computational register is $\ket{\psi_{p'}}$. Define $h':=\bra p h \ket {p'}$, then $h'$ is an operator acting only on the state register with $\|h'\|_\infty\leq \|h\|_\infty\leq 1$. Moreover, $h'=\id_{all\backslash A} \otimes h'_{A}$, where $A$ is the set of qudits from the state register where $h$ acts non-trivially.

We first assume that we are in the case where $V_p$ and $V_{p'}$ are unitaries independent of the quantum computation. For a fixed $h$ and a fixed quantum state $\ket{\psi_{p'}}$, define  $g(U)=|f_{h}(U)|^2$ with $f_{h}(U) := \bra {\psi_{p'}}V_{p'} U V_p^\dagger  h'  \ket{\psi_{p'}}$. It is easy to see that $g$ has Lipschitz constant upper bounded by $2$ and its average $\mu$ with respect to the Haar measure in $\mathbb U(d^n)$ fulfils $\mu\leq \frac {\|h'\|^2_\infty}{ d^n}\leq \frac {1}{d^n}$ (note that this function is the square of the scalar product between a Haar distributed vector of norm one and a fixed vector $V_p^\dagger h' \ket {\psi_{p'}} $ of norm $\leq \|h'\|_\infty \leq 1$). Then, by concentration over the unitary group \cite{Ledoux}, we get
\[\Pr_{U \sim \mu_{\text{Haar}}} \left\{\left||f_{h}(U)|^2 -\mu
\right| \geq  \delta  \right\} \leq 2 e^{-\frac{d^{n} \delta^{2}}{2^2 12} }. \]
From \cref{cor:main-design}, choosing
\[s_1 = \left\lfloor \left(\frac{r_1}{11050 n^2\log(d)}\right)^{1/11}\right\rfloor
\qquad\text{and}\qquad
\epsilon=\left(\frac{24s_1}{(d^{4n}+d^{-n})^2d^n}\right)^{s_1/2},\]
we get that local random quantum circuits of size $r_1$ form an $\epsilon$-approximate unitary $s_1$-design.

Now, applying \cref{Lowconcen} to function $g$. We have $D = d^{n}$, $K = 1$ and average $\mu>0$. We can upper bound $\alpha(g)\leq d^{4n}$, $\mu\leq \frac {1}{d^n}$ and Lipschitz constant upper bounded by $2$ and taking $m = s_1/2$
\begin{align*}
	\Pr_{U \sim \nu_{s_1} } \left\{\left| |f_{h}(U) |^2-\mu\right| \geq \delta  \right\} &\leq \frac{1}{\delta^{s_1}} \left( 2 \left( \frac{24 s_1}{d^{n}}  \right)^{s_1/2} +2\epsilon (d^{4n}+d^{-n})^{s_1}  \right) \nonumber \\
	&\leq 4 \left( \frac{24s_1}{d^{n}\delta^2}  \right)^{s_1/2}.
\end{align*}
For $\delta\leq 1/2, n> 3$ and using that $\mu\leq \frac {1}{d^n}$ we have
\begin{align*}\Pr_{U \sim \nu_{s_1} } \left\{|f_{h}(U) | \geq  \frac {1}{d^{n/2}}+\delta \right\} &\leq 	\Pr_{U \sim \nu_{s_1} } \left( |f_{h}(U) |^2\leq  \frac {1}{d^n}+\delta \right)\\
&\leq 4 \left( \frac{24s_1}{d^{n}\delta^2}\right)^{s_1/2}.\end{align*}

Let $\cN_k$ be a $\delta/4$-net of $\mathfrak h_k$, then we can assume (\cref{eps-net-hamil}) that
\[|\cN_k| \leq \binom{m}{k} \left(\frac{12}{\delta}\right)^{d^{2k}}.\]
Then, we have
\begin{align*}
\Pr_{U \sim \nu_{s_1} } \left\{  \max_{h\in\mathfrak h_k} |f_{h}(U) | \geq \frac {1}{d^{n/2}}+\delta \right\}&\leq\Pr_{U \sim \nu_{s_1} } \left\{  \max_{h\in\mathfrak h_k} |f_{h}(U) | \geq \frac {1}{d^{n/2}}+3\delta/4 \right\}\\
&\leq \Pr_{U \sim \nu_{s_1}} \left\{ \max_{h \in \cN_k} |f_{h}(U) | \geq \frac {1}{d^n}+\delta/2 \right\}\\
&\leq 4 \binom m k\left(\frac{12}{\delta}\right)^{d^{2k}} \left( \frac{96s_1}{d^{n}\delta^2}  \right)^{s_1/2},
\end{align*}
where we are using the fact that if $\|h-h'\|\leq \delta/4$ then $|f_h(U)-f_{h'}(U)\leq \delta/4$ in the third inequality and a union bound in the last one.
As we mention at the beginning of the proof the former probability is conditioned in the fact that whenever the time register is in the state $\ket {p'}$ the computational register is in the state $\ket {\psi_{p'}}$. But the bound on this probability is independent of the state $\ket {\psi_{p'}}$ and of the first part of the local random circuit, then we have
\begin{equation*}
\Pr_{l.r.c} \left\{  \max_{h\in \mathfrak h_k} |f_{h} | \geq \frac {1}{d^{n/2}}+\delta \right\}	\leq 4 \binom m k \left(\frac{12}{\delta}\right)^{d^{2k}} \left( \frac{96s_1}{d^{n}\delta^2}  \right)^{s_1/2}.
\end{equation*}
This finishes the proof in the case where $V_p$ and $V_{p'}$ are independent of the circuit.

If either or both of $V_p$ and $V_{p'}$ are circuits of size at most $\Vpoly(n)$, we apply a $\epsilon$-net argument on the set of circuits. We will show the case where both $V_p$ and $V_{p'}$ are circuits of size at most $\Vpoly(n)$ being the other totally analogous.

Let $\cM_{\Vpoly(n)}$ be a $\delta/8$-net of $\cC_{\Vpoly(n)}$, then we can assume (\cref{eps-net-circuits}) that
\[ |\cM_{\Vpoly(n)}| \leq \binom{n}{2}^{\Vpoly(n)} \left(\frac{48\Vpoly(n)}{\delta}\right)^{{\Vpoly(n)}d^4}.\]
Hence, for any $V,W \in \cC_{\Vpoly(n)}$, there exists a $C_V,C_W\in \cC_{\Vpoly(n)}$ such that $|\bra {\psi_{p'}}V U W^\dagger  h'  \ket{\psi_{p'}}-\bra {\psi_{p'}}C_V U C_W^\dagger  h'  \ket{\psi_{p'}}|\leq \delta/4$.

\begin{align*}
\Pr_{l.r.c.} &\left\{ \max_{h\in\mathfrak h_k} |f_{h}(U) | \geq \frac {1}{d^{n/2}}+\delta \right\}\\
&\leq\Pr_{l.r.c.} \left\{  \max_{h\in\mathfrak h_k} \max_{V_p\in \cC_r} \max_{V_{p'}\in \cC_r }|f_{h}(U) | \geq \frac {1}{d^{n/2}}+\delta \right\}\\
&\leq \Pr_{V \sim \nu_{s_1}} \left\{ \max_{h \in \cN_k}\max_{V_p\in \cM_r} \max_{V_{p'}\in \cM_r } |f_{h}(U) | \geq \frac {1}{d^n}+\delta/2 \right\}\\
&\leq 4 \binom{n}{2}^{2\Vpoly(n)} \left(\frac{48\Vpoly(n)}{\delta}\right)^{2\Vpoly(n)d^4} \binom m k\left(\frac{12}{\delta}\right)^{d^{2k}} \left( \frac{96s_1}{d^{n}\delta^2}  \right)^{s_1/2},
\end{align*}
\end{proof}

\begin{proof}[of \cref{lemmaconc2}]
Suppose, without loss of generality, that $t_p- {t_{p'}}\geq 0$. Define $h':=\bra p h \ket {p'}$, then $h'$ is an operator acting only on the state register with $\|h'\|_\infty\leq \|h\|_\infty\leq 1$. Moreover, $h'=\id_{all\backslash A} \otimes h'_{A}$, where $A$ is the set of qudits from the state register where $h$ acts non-trivially.

We first assume that we are in the case where $V_p$ and $V_{p'}$ are unitaries independent of the quantum computation. For fixed $h\in \mathfrak h_k$ and $W\in \mathbb U(n)$, define  $f=\mathrm{Re}(f_{h,W}(U))$ with $f_{h,W}(U) := \bra{0^{n}}U^{\cal y} V^\dagger_p h' V_p' W U
\ket{0^{n}}-\bra{0^{n-1}1}U^{\cal y} V^\dagger_p h' V_p' W U
\ket{0^{n-1}1}$. It is easy to see that $f$ has Lipschitz constant upper bounded by $4$ and its average $\mu$ with respect to the Haar measure in $\mathbb U(d^n)$ equals $0$, as both terms have the same average by the unitarily invariance of the Haar measure. Then, by concentration over the unitary group \cite{Ledoux},
\[\Pr_{U \sim \mu_{\text{Haar}}} \left\{|\mathrm{Re}( f_{h,W}(U) )| \geq \frac \delta {2\sqrt 2} \right\} \leq 2 e^{-\frac{d^{n} (\delta/2\sqrt 2)^{2}}{4^2 12} }.\]

From \cref{cor:main-design}, choosing
\[s = \left\lfloor \left(\frac{r}{1900 n^2\log(d)}\right)^{1/11}\right\rfloor
\qquad\text{and}\qquad
\epsilon= \left(\frac{768s}{(2d^{2n})^2d^n}\right)^{s/2},\]
we have that local random quantum circuits of size $r$ form an $\epsilon$-approximate unitary $s$-design.
Now, applying \cref{Lowconcen}, we want to turn the concentration of $f$ over the unitary group to concentration over the set of random quantum circuits of size $r$.	For the function $f$, we have that  $D = d^{n}$, $\mu = 0$ and the degree is $K=1$. We can upper bound $\alpha(f)\leq 2\sum_{ab=1}^{d^n} |(h'V)_{ab}|\leq 2d^{2n}$. Then using \cref{Lowconcen} with $m = s/2$ we get
\begin{align*}
	\Pr_{U \sim \nu_s } \left\{|\mathrm{Re}( f_{h,W}(U) )| \geq \frac \delta {2\sqrt 2} \right\}
	&\leq
	\frac{1}{(\delta/2\sqrt 2)^{s}}
	\left( 2 \left( \frac{768s}{d^{n}}  \right)^{s/2} +
	2\epsilon (2d^{2n})^{s}  \right) \nonumber \\
	&= 4 \left( \frac{6144 s}{d^{n}\delta^2}  \right)^{s/2}.
\end{align*}

Repeating the same argument for the function $\mathrm{Im}(f_{h,V}(U))$ and applying a union bound argument we get
\[\Pr_{U \sim \nu_s } \left\{| f_{h,W}(U) | \geq \frac \delta {2} \right\}\leq 8 \left( \frac{6144 s}{d^{n}\delta^2}  \right)^{s/2}.\]

Let $\cN_k$ be a $\delta/4$-net of $\mathfrak h_k$, then we can assume (\cref{eps-net-hamil}) that
\[|\cN_k| \leq \binom{m}{k} \left(\frac{12}{\delta}\right)^{d^{2k}}\]
and applying a union bound argument, we have that
\begin{align*}
\Pr_{U \sim \nu_s } \left\{  \max_{h\in\mathfrak h_k}  |f_{h,V}(U) | \geq \delta \right\} &\leq \Pr_{U \sim \nu_s } \left\{  \max_{h\in\mathfrak h_k}  |f_{h,V}(U) | \geq 3\delta/4 \right\}\\
&\leq \Pr_{U \sim \nu_s} \left\{ \max_{h \in \cN_k} |f_{h,V}(U)| \geq \delta/2 \right\}\\
&\leq 8\binom m k \left(\frac{12}{\delta}\right)^{d^{2k}}  \left( \frac{6144 s}{d^{n}\delta^2}  \right)^{s/2}.
\end{align*}

Finally, the bound on the probability is independent of $V$. Hence, we have that \[\Pr_{l.r.c.} \left\{ \max_{h\in\mathfrak h_k}  |f_{h}| \geq \delta \right\} \leq 8 \binom m k \left(\frac{12}{\delta}\right)^{d^{2k}}  \left( \frac{6144 s}{d^{n}\delta^2}  \right)^{s/2}.\]
This finishes the proof in the case where $V_p$ and $V_{p'}$ are independent of the circuit.

If either or both of $V_p$ and $V_{p'}$ are circuits of size at most $\Vpoly(n)$, we can take $\cM_{\Vpoly(n)}$ a $\delta/8$-net of the set $\cC_{\Vpoly(n)}$ as in the proof of \cref{lemmaconc1}, and repeating the reasoning there we get the result.
\end{proof}

\section*{Acknowledgements}

The authors thank Elizabeth Crosson for illuminating discussions about these results.
CEGG thanks UCL Department of Computer Science and the CS Quantum group for their hospitality during the first semester of 2017.
TSC is supported by the Royal Society.
CEGG is supported by Spanish MINECO (project MTM2017-88385-P) and MECD ``Jos\'e Castillejo'' program (CAS16/00339).

\printbibliography

\end{document}